\newtheorem{fact}[theorem]{Fact}
\newcommand{\occ}{\mathit{occ}}
\newcommand{\NOcc}{\mathsf{NO}}
\newcommand{\ENO}{\mathsf{ENO}}
\newcommand{\NUS}{\mathsf{NUS}}
\newcommand{\sqsuf}{\mathit{sqSuf}}
\newcommand{\lrsuf}{\mathit{lrSuf}}
\newcommand{\sqpref}{\mathit{sqPref}}
\newcommand{\lrpref}{\mathit{lrPref}}
\newcommand{\LB}{\mathsf{LB}}
\newcommand{\RB}{\mathsf{RB}}
\newcommand{\LM}{\mathsf{LM}}
\newcommand{\RM}{\mathsf{RM}}
\newcommand{\M}{\mathsf{M}}
\newcommand{\MUS}{\mathsf{MUS}}
\newcommand{\STree}{\mathsf{STree}}
\newcommand{\CDAWG}{\mathsf{CDAWG}}
\newcommand{\e}{\mathsf{e}}
\newcommand{\rb}{\mathsf{r}}
\title{Space-Efficient Online Computation of String Net Occurrences}
\author{Takuya Mieno}{Department of Computer and Network Engineering, University of Electro-Communications, Chofu, Japan}{tmieno@uec.ac.jp}{https://orcid.org/0000-0003-2922-9434}{}
\author{Shunsuke Inenaga}{Department of Informatics, Kyushu University, Fukuoka, Japan}{inenaga.shunsuke.380@m.kyushu-u.ac.jp}{https://orcid.org/0000-0002-1833-010X}{} 
\authorrunning{Mieno~and~Inenaga} 
\keywords{string net occurrences, suffix trees, CDAWGs, maximal repeats, minimal unique substrings (MUSs)} 
\begin{document}

\maketitle

\begin{abstract}
  A substring $u$ of a string $T$ is said to be a \emph{repeat} if $u$ occurs at least twice in $T$.
  An occurrence $[i..j]$ of a repeat $u$ in $T$ is said to be a \emph{net occurrence} if
  each of the substrings $aub = T[i-1..j+1]$, $au = T[i-1..j+1]$, and $ub = T[i..j+1]$ occurs exactly once in $T$.
  The occurrence $[i-1..j+1]$ of $aub$ is said to be an \emph{extended} net occurrence of $u$.
  Let $T$ be an input string of length $n$ over an alphabet of size $\sigma$,
  and let $\ENO(T)$ denote the set of extended net occurrences of repeats in $T$.
  Guo et al. [SPIRE 2024] presented an online algorithm which can report $\ENO(T[1..i])$ 
  in $T[1..i]$ in $O(n\sigma^2)$ time, for each prefix $T[1..i]$ of $T$.
  Very recently, Inenaga [arXiv 2024] gave a faster online algorithm that can report $\ENO(T[1..i])$ in optimal $O(\#\ENO(T[1..i]))$ time for each prefix $T[1..i]$ of $T$, where $\#S$ denotes the cardinality of a set $S$.
  Both of the aforementioned data structures can be maintained in $O(n \log \sigma)$ time and occupy $O(n)$ space, where the $O(n)$-space requirement comes from the suffix tree data structure. In particular, Inenaga's recent algorithm is based on Weiner's right-to-left online suffix tree construction.
  In this paper, we show that one can modify Ukkonen's left-to-right online suffix tree construction algorithm in $O(n)$ space, so that $\ENO(T[1..i])$ can be reported in optimal $O(\#\ENO(T[1..i]))$ time for each prefix $T[1..i]$ of $T$.
  This is an improvement over Guo et al.'s method that is also based on Ukkonen's algorithm.
  Further, this leads us to the two following space-efficient alternatives:
  \begin{itemize}
    \item A \emph{sliding-window} algorithm of $O(d)$ working space that can report 
      $\ENO(T[i-d+1..i])$ in optimal $O(\#\ENO(T[i-d+1..i]))$ time for each sliding window $T[i-d+1..i]$ of size $d$ in $T$.
    \item A \emph{CDAWG}-based online algorithm of $O(\e)$ working space that can report 
      $\ENO(T[1..i])$ in optimal $O(\#\ENO(T[1..i]))$ time for each prefix $T[1..i]$ of $T$, where $\e < 2n$ is the number of edges in the CDAWG for $T$.
  \end{itemize}
  All of our proposed data structures can be maintained in $O(n \log \sigma)$ time
  for the input online string $T$.
  We also discuss that the extended net occurrences of repeats in $T$ can be fully characterized 
  in terms of the \emph{minimal unique substrings} (\emph{MUSs}) in $T$.
\end{abstract}

\section{Introduction}\label{sec:intro}

Finding \emph{repeats} in a string is a fundamental task of string processing that has applications
in various fields including bioinformatics, data compression, and natural language processing.
This paper focuses on the notion of \emph{net occurrences} of a repeat in a string, which has attracted recent attention.
Let $u$ be a repeat in a string $T$ such that $u$ occurs at least twice in $T$.
An occurrence $[i..j]$ of a repeat $u$ in $T$ is said to be a net occurrence of $u$ if 
extending the occurrence to the left or to the right results in a unique occurrence,
i.e., each of $aub = T[i-1..j+1]$, $au = T[i-1..j]$, and $ub = T[i..j+1]$ occurs exactly once in $T$.
Finding string net occurrences are motivated for Chinese language text processing ~\cite{lin2001extracting,lin2004properties}.
The occurrence $[i-1..j+1]$ of $aub$ is said to be an \emph{extended net occurrence} of a repeat $u$ in $T$,
and let $\ENO(T)$ denote the set of all extended net occurrences of repeats in $T$.

Guo et al.~\cite{GuoCPM2024} were the first who considered the problem of computing (extended) net occurrences of repeats in a string
from view points of string combinatorics and algorithmics.
Guo et al.~\cite{GuoCPM2024} showed a necessary and sufficient condition for a net occurrence of a repeat,
which, since then, has played a core role in efficient computation of string net occurrences.
For an input string $T$ of length $n$,
they gave an \emph{offline} algorithm for computing $\ENO(T)$ in $O(n)$ time and space for integer alphabets of size polynomial in $n$,
and in $O(n \log \sigma)$ time and $O(n)$ space for general ordered alphabets of size $\sigma$.
Their offline method is based on the suffix array~\cite{ManberM93} and the Burrows-Wheeler transform~\cite{BWT94}.
Ohlebusch et al. gave another offline algorithm that works fast in practice~\cite{OhlebuschBO24}.

Later, Guo et al.~\cite{GuoUWZ24} proposed an \emph{online} algorithm for computing all string net occurrences of repeats.
Their algorithm maintains a data structure of $O(n)$ space that reports $\ENO(T[1..i])$
in $O(n\sigma^2)$ time for each prefix $T[1..i]$ of an online input string $T$ of length $n$.
Since their algorithm computes all the net occurrences upon a query,
their algorithm requires at least $O(n^2\sigma^2)$ time to \emph{maintain
and update} the list of all (extended) net occurrences of repeats in an online string.
Their algorithm is based on Ukkonen's left-to-right online suffix tree construction~\cite{Ukkonen95},
that is enhanced with the suffix-extension data structure of Breslauer and Italiano~\cite{BreslauerI12}.

Very recently, Inenaga~\cite{Inenaga2024-arxiv} proposed a faster algorithm that can maintain 
$\ENO(T[1..i])$ for an online string $T[1..i]$ with growing $i = 1, ..., n$ in a total of $O(n \log \sigma)$ time
and $O(n)$ space.
Namely, this algorithm uses only amortized $O(\log \sigma)$ time to update $\ENO(T[1..i])$ to $\ENO(T[1..i+1])$.
The proposed algorithm is based on Weiner's right-to-left online suffix tree construction~\cite{Weiner73}
that is applied to the reversed input string,
and can report all extended net occurrences of repeats in $T[1..i]$ in optimal $O(\#\ENO(T[1..i]))$ for each $1 \leq i \leq n$,
where $\#S$ denotes the cardinality of a set $S$.

In this paper, we first show that 
Ukkonen's left-to-right online suffix tree construction algorithm 
can also be modified so that it can maintain and update $\ENO(T[1..i])$ in a total of $O(n \log \sigma)$ time
with $O(n)$ space, and can report $\ENO(T[1..i])$ in optimal $O(\#\ENO(T[1..i]))$ time for each $i = 1, \ldots, n$.
While this complexity of our Ukkonen-based method is the same as the previous Weiner-based method~\cite{Inenaga2024-arxiv},
our method enjoys the following merits:
\begin{enumerate}
  \item[(1)] Our result shows that the arguably complicated suffix-extension data structure of Breslauer and Italiano is not necessary for online computation of string net occurrences with Ukkonen's algorithm.
  \item[(2)] The new method can be extended to the \emph{sliding suffix trees}~\cite{Larsson96,Senft05,LeonardIBM-arxiv} and the \emph{compact directed acyclic word graphs} (\emph{CDAWGs})~\cite{BlumerBHME87,InenagaHSTAMP05}.
\end{enumerate}
The first point is a simplification and improvement over Guo et al.'s method~\cite{GuoUWZ24} based on Ukkonen's construction.
The second point leads us to the following space-efficient alternatives:
\begin{itemize}
  \item A sliding-window algorithm of $O(d)$ working space that can be maintained in $O(n \log \sigma)$ time
    and can report $\ENO(T[i-d+1..i])$ in optimal $O(\#\ENO(T[i-d+1..i]))$ time for each sliding window $T[i-d+1..i]$ of size $d$ in $T$.
  \item A CDAWG-based online algorithm of $O(\e)$ working space that can be maintained in $O(n \log \sigma)$ time 
    and can report $\ENO(T[1..i])$ in optimal $O(\#\ENO(T[1..i]))$ time for each prefix $T[1..i]$ of $T$, where $\e$ is the number of edges in the CDAWG for $T$.
\end{itemize}
We note that $\e < 2n$ always holds~\cite{BlumerBHME87}, 
and $\e$ can be as small as $O(\log n)$ for some highly repetitive strings~\cite{Rytter06,RadoszewskiR12}.
Finally, we also discuss that the extended net occurrences of repeats in a string $T$ can be fully characterized 
with the \emph{minimal unique substrings} (\emph{MUSs})~\cite{IlieS11} in $T$. \section{Preliminaries}\label{sec:pre}
\paragraph*{Strings.}
Let $\Sigma$ be an alphabet.
An element of $\Sigma$ is called a character.
An element of $\Sigma^\star$ is called a string.
The empty string $\varepsilon$ is the string of length $0$.
If $T = pfs$ holds for strings $T, p, f$, and $s$,
then $p, f$, and $s$ are called a prefix of $T$, a substring of $T$, and a suffix of $T$, respectively.
A prefix $p$ (resp. a suffix $s$) of $T$
is called a \emph{proper} prefix (resp. a \emph{proper} suffix) of $T$ if $p \neq T$ (resp. $s \neq T$).
For a string $T$, $|T|$ denotes the length of $T$.
For a string $T$ and an integer $i$ with $1\le i \le |T|$,
$T[i]$ denotes the $i$th character of $T$.
For a string $T$ and integers $i, j$ with $1\le i \le j \le |T|$,
$T[i.. j]$ denotes the substring of $T$ starting at position $i$ and ending at position $j$.
For strings $T$ and $w$, we say $w$ \emph{occurs} in $T$ if $T[i.. j] = w$ holds for some $i,j$.
Also, if such $i,j$ exist, we denote by $[i.. j]$ the \emph{occurrence} of $w = T[i.. j]$ in $T$.
Also, we denote by $\occ_T(w)$ the set of occurrences of $w$ in $T$,
i.e., $\occ_T(w) = \{[i.. j] \mid T[i.. j] = w\}$.
For any set $S$, we denote by $\#S$ the cardinally of $S$.
For convenience, 
we assume that the empty string $\varepsilon$ occurs $|T|-1$ times
at the boundaries of consecutive characters in $T$,
and denote these inner occurrences by $T[i+1..i] = \varepsilon$ for $1 \le i < n$.
We also assume that $\varepsilon$ occurs before the first character and after the last character of $T$.
Thus we have $\#\occ_T(\varepsilon) = |T|+1$.
A string $w$ is said to be \emph{unique} in $T$ if $\#\occ_T(w) = 1$.
Also, $w$ is said to be \emph{quasi-unique} in $T$ if $1 \le \#\occ_T(w) \le 2$.
Further, $w$ is said to be \emph{repeating} in $T$ if $\#\occ_T(w) \ge 2$.
We denote by $\lrsuf(T)$~(resp.~$\lrpref(T)$) the longest repeating suffix~(resp.~prefix) of $T$.
We denote by $\sqsuf(T)$~(resp.~$\sqpref(T)$) the shortest quasi-unique suffix~(resp.~prefix) of $T$.

\paragraph*{Maximal repeats and minimal unique substrings.}

A repeating substring of a string $T$ is also called a \emph{repeat} in $T$.
A repeat $u$ in $T$ is said to be 
\emph{left-branching} in $T$ if there are at least two distinct characters $a,a' \in \Sigma$
such that $\#\occ_T(au) \geq 1$ and $\#\occ_T(a'u) \geq 1$.
Symmetrically, a repeat $u$ is said to be \emph{right-branching} in $T$ 
if there are at least two distinct characters
$b,b' \in \Sigma$ such that $\#\occ_T(ub) \geq 1$ and $\#\occ_T(ub') \geq 1$.
A repeat $u$ of $T$ is said to be \emph{left-maximal} in $T$ 
if $u$ is a left-branching repeat of $T$ or $u$ is a prefix of $T$,
and $u$ is said to be \emph{right-maximal} in $T$ if $u$ is a right-branching repeat in $T$ 
or $u$ is a suffix of $T$.
A repeat $u$ of $T$ is said to be \emph{maximal} in $T$ if $u$ is both a left-maximal repeat and a right-maximal repeat in $T$.
Let $\LB(T)$, $\RB(T)$, $\LM(T)$, $\RM(T)$, and $\M(T)$ denote 
the sets of left-branching, right-branching, left-maximal, right-maximal, and maximal repeats in $T$,
respectively.
Note that $\LB(T) \subseteq \LM(T)$, $\RB(T) \subseteq \RM(T)$, and $\M(T) = \LM(T) \cap \RM(T)$ hold.
A unique substring $u = T[i..j]$ of a string $T$ is said to be a \emph{minimal unique substring} (\emph{MUS}) of $T$
if each of the substrings $T[i-1..j]$ and $T[i..j-1]$ is a repeat in $T$.
By definition, no MUS can be completely contained in another MUS in the string,
and thus, there are at most $n$ MUSs in any string $T$ of length $n$.
Let $\MUS(T)$ denote the set of occurrences of all MUSs in $T$.

In what follows, we fix a string $T$ of length $n > 2$ arbitrarily. 

\paragraph*{(Extended) net occurrences.}
For a repeating substring $P$ of $T$, its occurrence $[i.. j]$ in $T$ is said to be a \emph{net occurrence} of $P$ if $T[i.. j] = P$, $T[i-1.. j]$ is unique in $T$, and $T[i.. j+1]$ is unique in $T$.
Let $\NOcc(T)$ be the set of net occurrences in $T$.
We call $T[i-1.. j+1]$ a net unique substring (NUS) if $[i..  j] \in \NOcc(T)$.
Let $\NUS(T)$ be the set of net unique substrings in $T$.
Then, we call the occurrence $[i-1.. j+1]$ of NUS $T[i-1.. j+1]$ the \emph{extended net occurrence} of the repeat $T[i.. j]$.
Let $\ENO(T)$ be the set of extended net occurrences in $T$.
Clearly, there is a one-to-one correspondence between $\NUS(T)$ and $\ENO(T)$, i.e., $T[p.. q] \in \NUS(T)$ iff $[p.. q] \in \ENO(T)$.
Note that $\#\ENO(T) = \#\NOcc(T)$ holds.

\paragraph*{Data structures.}
The \emph{suffix tree}~\cite{Weiner73} of a string $T$ is a compacted trie that represents all suffixes of $T$.
More formally, the suffix tree of $T$ is a rooted tree such that
(1) each edge is labeled by a non-empty substring of $T$,
(2) the labels of the out-edges of the same node begin with distinct characters, and
(3) every suffix of $T$ is represented by a path from the root.
If the path from the root to a node $v$ spells out the substring $w$ of $T$,
then we say that the node $v$ \emph{represents} $w$.
By representing each edge label $x$ with a pair $(i,j)$ of positions
such that $x = T[i..j]$, the suffix tree can be stored in $O(n)$ space.
For convenience, we identify each node with the string that the node represents.
If $av$ is a node of the suffix tree with $a \in \Sigma$ and $v \in \Sigma^\star$,
then the \emph{suffix link} of the node $av$ points to the node $v$.

There are two versions of suffix trees, \emph{implicit suffix trees}~\cite{Ukkonen95} (a.k.a. \emph{Ukkonen trees}) 
and \emph{explicit suffix trees}~\cite{Weiner73} (a.k.a \emph{Weiner trees}).
In the implicit suffix tree of string $T$,
each repeating suffix $s$ of $T$ that has a unique right-extension $c \in \Sigma$
(namely, $\#\occ_{T}(sc) \geq 1$ and $\#\occ_T(sa) = 0$ for any $a \in \Sigma \setminus \{c\}$)
is represented on an edge.
On the other hand, each such repeating suffix is represented by a non-branching internal node
in the explicit suffix tree of $T$.
Let $\STree'(T)$ and $\STree(T)$ denote the implicit suffix tree and the explicit suffix tree of string $T$, respectively.
The internal nodes of $\STree'(w)$ represent the right-branching repeats of $T$,
while the internal nodes of $\STree(w)$ represent the right-maximal repeats of $T$.
It is thus clear that $\STree'(T\$) = \STree(T\$)$ with a unique end-marker $\$$ that does not occur in $T$.
Due to the nature of left-to-right online string processing,
we will use the implicit suffix trees in our algorithms.

\begin{figure}[bht]
  \includegraphics[scale=0.35]{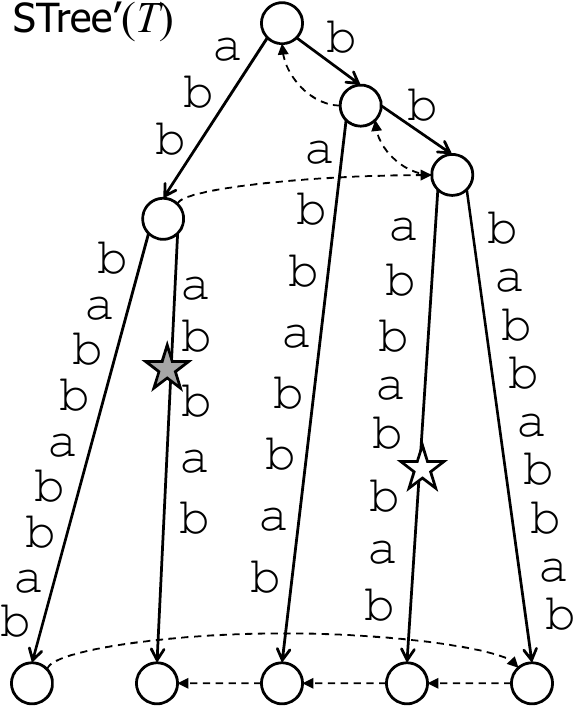}
  \hfill
  \includegraphics[scale=0.35]{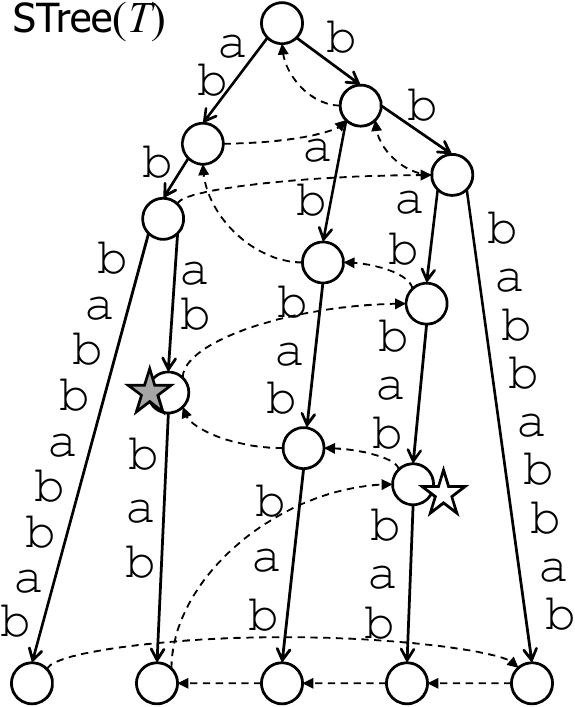}
  \hfil
  \includegraphics[scale=0.35]{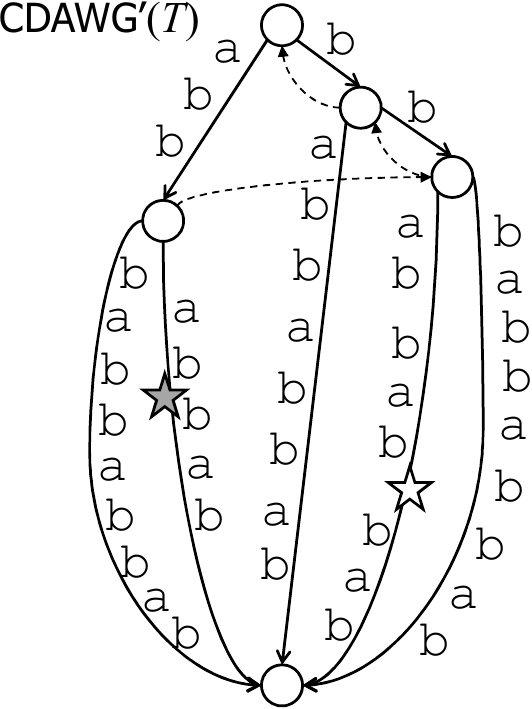}
  \hfil
  \includegraphics[scale=0.35]{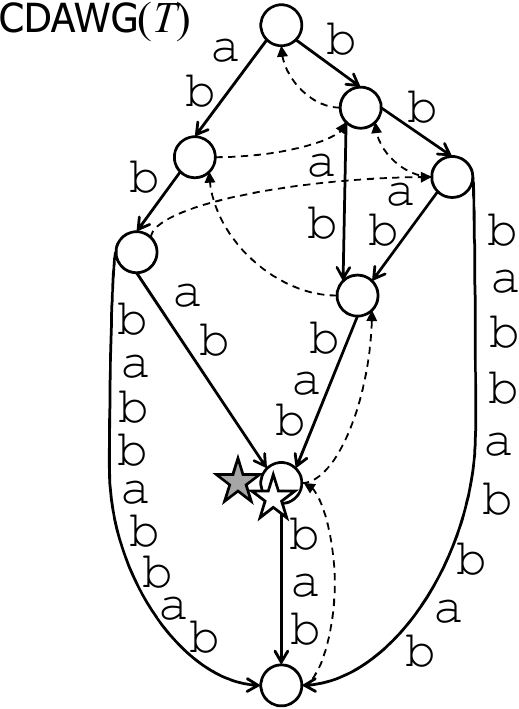}
  \caption{The implicit suffix tree $\STree'(T)$, the explicit suffix tree $\STree(T)$, the implicit CDAWG $\CDAWG'(T)$, and the explicit CDAWG $\CDAWG'(T)$ for string $T = \mathtt{abbbabbabbab}$. The broken arrows represent suffix links. The white and gray stars represent the loci of the longest repeating suffix $\mathtt{bbabbab}$ and shortest quasi-unique suffix $\mathtt{abbab}$ of $T$, respectively.}
  \label{fig:stree}
\end{figure}

The \emph{compact directed acyclic word graph} (\emph{CDAWG})~\cite{BlumerBHME87} of a string $T$ is the smallest 
edge-labeled DAG that represents all substrings of $T$.
There are two versions of CDAWGs as well, \emph{implicit CDAWGs}~\cite{InenagaHSTAMP05} and \emph{explicit CDAWGs}~\cite{BlumerBHME87}.
The implicit CDAWG of string $T$, denoted $\CDAWG'(T)$, is the edge-labeled DAG
that is obtained by merging all isomorphic subtrees of the implicit suffix tree $\STree'(T)$
which are connected by suffix links.
On the other hand, the explicit CDAWG of $T$, denoted $\CDAWG(T)$, is the edge-labeled DAG
that is obtained by merging all isomorphic subtrees of the explicit suffix tree $\STree(T)$
which are connected by suffix links.
The internal nodes of $\CDAWG'(T)$ have a one-to-one correspondence with the left-maximal and right-branching repeats in $T$,
while the internal nodes of $\CDAWG(T)$ have a one-to-one correspondence with the maximal repeats in $T$.
More precisely, the longest string represented by an internal node in $\CDAWG'(T)$
is a left-maximal and right-branching repeat in $T$,
and the longest string represented by an internal node in $\CDAWG(T)$ is a maximal repeat in $T$.
Thus, as in the case of suffix trees, $\CDAWG'(T\$) = \CDAWG(T\$)$ holds with a unique end-marker $\$$.

The length of a path in the implicit/explicit CDAWG is the total length of the labels of the edges in the path.
An in-edge of a node $v$ in the implicit/explicit CDAWG is said to be a \emph{primary edge}
if it belongs to the longest path from the source to $v$.

Due to the nature of left-to-right online string processing,
we will use the implicit CDAWGs in our algorithm.
Let $\e'(T)$ and $\e(T)$ denote the number of edges of $\CDAWG'(T)$ and $\CDAWG(T)$, respectively.
The following lemma guarantees that the worst-case space complexity of our implicit-CDAWG based algorithm is linear in the size of explicit CDAWGs:
\begin{lemma} \label{lem:implicit_CDAWG_size}
  For any string $T$, $\e'(T) \leq \e(T)$.
\end{lemma}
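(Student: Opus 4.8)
The plan is to count edges by summing out-degrees and to exploit the two node characterizations already recorded in the excerpt. Since the number of edges of any DAG equals the sum of the out-degrees of its nodes, I would write $\e'(T) = \sum_{V} \mathrm{outdeg}_{\CDAWG'(T)}(V)$ and $\e(T) = \sum_{W}\mathrm{outdeg}_{\CDAWG(T)}(W)$, and then compare the two sums node by node. The guiding observation is that, in either CDAWG, the out-degree of a node equals the number of distinct right-extension characters of the longest string $w$ it represents: out-edges leaving a node begin with pairwise distinct characters (property (2) of the underlying trie, inherited by the CDAWG), and there is exactly one out-edge for each character $c$ with $wc \in \Substr(T)$. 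Consequently the out-degree depends only on $w$, and not on which of the two suffix trees the node was obtained from.

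Next I would match the node sets. By the characterizations in the excerpt, the internal nodes of $\CDAWG'(T)$ correspond to the strings in $\LM(T)\cap\RB(T)$, whereas those of $\CDAWG(T)$ correspond to the maximal repeats $\M(T)=\LM(T)\cap\RM(T)$. Because every right-branching repeat is right-maximal (i.e. $\RB(T)\subseteq\RM(T)$), we have $\LM(T)\cap\RB(T)\subseteq \M(T)$, so each internal node of $\CDAWG'(T)$ is, as a string $w$, also an internal node of $\CDAWG(T)$, and by the previous paragraph it carries the same out-degree $\#\{c : wc\in\Substr(T)\}$ in both graphs. The source $\varepsilon$ and the single sink behave identically in both: the source has out-degree equal to the number of distinct characters occurring in $T$, and the sink (into which all the isomorphic single-node leaf subtrees of either suffix tree are merged) has out-degree $0$. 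Hence the two sums agree on all shared nodes.

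It then remains to account for the nodes of $\CDAWG(T)$ with no counterpart in $\CDAWG'(T)$, namely the strings in $\M(T)\setminus\RB(T)$. Such a $w$ is left-maximal and right-maximal but not right-branching, so it must be a repeating suffix of $T$ whose non-terminal occurrences all share a single following character. As a repeat it occurs at some position other than the end of $T$ and is therefore followed by exactly one character, giving $\#\{c:wc\in\Substr(T)\}=1$ and out-degree $1>0$. Adding these strictly positive contributions to the common sum yields $\e(T)\ge \e'(T)$, which is exactly the claim (and shows the inequality is strict precisely when some maximal repeat fails to be right-branching).

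I expect the main obstacle to be the middle step: rigorously justifying that the out-degree of a CDAWG node equals the right-extension count of its longest string, and, in particular, that this number is invariant under the two different subtree-mergings even though the equivalence class attached to a shared string $w$ may be strictly larger in $\CDAWG'(T)$ than in $\CDAWG(T)$. The resolution is that merging isomorphic suffix-link-connected subtrees preserves the local branching at the node, so its out-degree equals that of the underlying suffix-tree node for $w$, which in turn is determined solely by the right extensions of $w$ and is unaffected by how the remainder of the class is partitioned. Deriving this cleanly from the merging definition of the CDAWG is where the care is needed.
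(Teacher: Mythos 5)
Your proposal is correct and follows essentially the same route as the paper: identify each node with the longest string it represents, use the characterizations $V' = \LM(T)\cap\RB(T)$ and $V = \M(T)$ together with $\RB(T)\subseteq\RM(T)$ to get $V'\subseteq V$, observe that the out-degree of a node in either CDAWG equals the number of right-extension characters of its longest string, and compare the out-degree sums. The only difference is that you additionally argue the extra nodes have out-degree exactly $1$, which the paper does not need (non-negativity suffices).
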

\begin{proof}
  Let $V'$ and $V$ be the sets of nodes of $\CDAWG'(T)$ and $\CDAWG(T)$, respectively.
  We identify each node of $V'$ and of $V$ with the longest string that the node represents.
  Then, $V' = \LM(T) \cap \RB(T)$ and $V = \LM(T) \cap \RM(T) = \M(T)$.
  Since $\RB(T) \subseteq \RM(T)$, we have that $V' \subseteq V$.
  Let $w \in \LM(T) \cap \RB(T)$, which implies that $w \in V'$ and $w \in V$.
  Let $v'$ and $v$ be the nodes that represent $w$ in $\CDAWG'(T)$ and $\CDAWG(T)$, respectively.
  The out-degree $d(v')$ of node $v'$ in $\CDAWG'(T)$,
  as well as the out-degree $d(v)$ of $v$ in $\CDAWG(T)$, 
  are equal to the number of right-extensions $c \in \Sigma$ such that $\#\occ_T(wc) \geq 1$.
  Thus $\e'(T) = \sum_{v' \in V'}d(v') \leq \sum_{u \in V}d(u) = \e(T)$.
\end{proof}
\section{Changes in net occurrences for online string}\label{sec:change}
In this section,
we show how the net unique substrings, equivalently the extended net occurrences in a string $T$ can change when a character is appended.
We will implicitly use the following fact throughout this section.
\begin{fact}\label{obs:occ}
  For any strings $T, w$, and character $c$,
  $\#\occ_{T}(w) \le \#\occ_{Tc}(w)$ holds.
\end{fact}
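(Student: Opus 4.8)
The plan is to establish the set inclusion $\occ_{T}(w) \subseteq \occ_{Tc}(w)$, from which the inequality $\#\occ_{T}(w) \le \#\occ_{Tc}(w)$ follows immediately by taking cardinalities. The guiding observation is that $T$ is a prefix of $Tc$, so appending $c$ only introduces new positions at the right end and never disturbs or destroys any existing position; hence every occurrence that is already present in $T$ must survive in $Tc$.

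First I would handle the case $w \neq \varepsilon$. Since $T$ is a prefix of $Tc$, we have $(Tc)[k] = T[k]$ for every index $k$ with $1 \le k \le |T|$. Therefore, for any occurrence $[i.. j] \in \occ_{T}(w)$, which by definition satisfies $1 \le i \le j \le |T|$ and $T[i.. j] = w$, the very same indices yield $(Tc)[i.. j] = T[i.. j] = w$, so that $[i.. j] \in \occ_{Tc}(w)$. This proves the inclusion for nonempty $w$, and the claimed inequality follows at once.

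Finally I would treat the empty string $w = \varepsilon$ separately, appealing directly to the boundary convention fixed in the preliminaries: each inner and outer boundary position of $T$ remains a boundary position of $Tc$, and exactly one additional boundary appears at the new right end, so $\#\occ_{T}(\varepsilon) = |T| + 1 \le |T| + 2 = \#\occ_{Tc}(\varepsilon)$. No step in this argument poses a genuine obstacle; the only point requiring any care is keeping the empty-string convention distinct from the generic substring argument, since occurrences of $\varepsilon$ are counted by a special rule rather than by the prefix-of-$Tc$ reasoning used for nonempty $w$.
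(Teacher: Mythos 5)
Your proof is correct; the paper states this as a Fact with no proof at all, and your argument (the inclusion $\occ_T(w) \subseteq \occ_{Tc}(w)$ for nonempty $w$ because $T$ is a prefix of $Tc$, plus the separate count $|T|+1 \le |T|+2$ for $w = \varepsilon$ under the paper's boundary convention) is exactly the routine justification the authors leave implicit. The care you take with the empty-string convention is a nice touch but changes nothing of substance.
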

Thus, if $w$ is repeating in $T$, then it must be repeating in $Tc$.
Further, if $w$ is unique in $Tc$ and is not a suffix of $Tc$, then $w$ must be unique in $T$.

Next lemma characterizes net unique substrings
to be deleted when a character $c$ is appended to string $T$.

\begin{lemma}\label{lem:deletedNOcc}
  Suppose that $aub \in \NUS(T)\setminus\NUS(Tc)$ be a net unique substring in $T$
  where $a, b, c \in \Sigma$ and $u \in \Sigma^\star$.
  Then,
  (i) $au = \sqsuf(Tc)$ and $\#\occ_{Tc}(au) = 2$ or
  (ii) $ub = \lrsuf(Tc)$ and $\#\occ_{Tc}(ub) = 2$ hold.
\end{lemma}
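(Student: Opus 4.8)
The plan is to rely on a single structural observation about appending a character, and then to split into two cases according to which of $au$ and $ub$ stops being unique. First I would record the key observation: in passing from $T$ to $Tc$, every occurrence of a string $w$ in $Tc$ that is not already present in $T$ must end at the last position of $Tc$, i.e.\ be a suffix occurrence; since a string has at most one suffix occurrence, this refines Fact~\ref{obs:occ} to $\#\occ_{Tc}(w) \le \#\occ_T(w) + 1$, with equality exactly when $w$ is a suffix of $Tc$ whose suffix occurrence is new. Next I would translate the hypothesis $aub \in \NUS(T)\setminus\NUS(Tc)$ into occurrence conditions: membership in $\NUS(T)$ means $u$ is repeating, $au$ is unique, and $ub$ is unique in $T$, while by Fact~\ref{obs:occ} the string $u$ stays repeating and $aub$ stays a substring in $Tc$. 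Hence the only way $aub$ can leave $\NUS(Tc)$ is that $au$ or $ub$ becomes non-unique in $Tc$, which gives the two cases.

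In the case $\#\occ_{Tc}(au) \ge 2$, the observation forces $\#\occ_{Tc}(au) = 2$ with $au$ a suffix of $Tc$, so $au$ is quasi-unique. To conclude $au = \sqsuf(Tc)$ I would show every strictly shorter suffix fails to be quasi-unique; it suffices to treat $u$, the proper suffix of $au$ of length one less, since occurrence counts only grow as a suffix is shortened. The suffix occurrence of $au$ contains a suffix occurrence of $u$ ending at the new last position, and this occurrence is genuinely new, so $\#\occ_{Tc}(u) = \#\occ_T(u) + 1 \ge 3$; therefore $u$ and all shorter suffixes have at least three occurrences, and $au$ is indeed the shortest quasi-unique suffix.

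In the case $\#\occ_{Tc}(ub) \ge 2$, the same observation gives $\#\occ_{Tc}(ub) = 2$ with $ub$ a repeating suffix of $Tc$; now, to get $ub = \lrsuf(Tc)$, I would instead show that the suffix of $Tc$ of length $|ub|+1$ is unique in $Tc$, which propagates to all longer suffixes and hence bounds the longest repeating suffix by $|ub|$. Note that $ub$ is repeating and therefore a proper suffix of $Tc$, so this length-$(|ub|+1)$ suffix is well defined.

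The main obstacle will be this last uniqueness step. Let $a'$ be the character preceding the new suffix occurrence of $ub$, so that the length-$(|ub|+1)$ suffix is $a'ub$. Dropping the leading character injects the occurrences of $a'ub$ into the two occurrences of $ub$, whose left-extensions are $a$ (from the unique $T$-occurrence of $ub$, namely the one inside $aub$) and $a'$ (from the new suffix occurrence). If $a \ne a'$, then $a'ub$ occurs only once and we are done. The delicate point is ruling out $a = a'$: in that case $aub$ is itself a suffix of $Tc$, and since the last character of $ub$ equals the appended character we get $b = c$, which forces $au$ to be a suffix of $T$; but $au$ is unique in $T$, so its only occurrence is the suffix of $T$, which cannot be right-extended, whence $aub$ does not occur in $T$ at all, contradicting $aub \in \NUS(T)$. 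Thus $a \ne a'$, completing the argument.
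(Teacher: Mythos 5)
Your proof is correct and follows essentially the same route as the paper's: the same translation of the hypotheses into occurrence counts, the same two-case split, the counting of occurrences of $u$ to pin down $\sqsuf(Tc)$ in case (i), and the analysis of the left-extension $a'$ of the new suffix occurrence of $ub$ in case (ii). The only (harmless) variation is how you refute $a=a'$ in case (ii): you argue that $aub$ could not then occur in $T$ because the unique occurrence of $au$ would be a suffix of $T$, whereas the paper derives a second occurrence of $au$ in $T$; both yield the same contradiction with the uniqueness of $au$.
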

\begin{proof}
  Since $aub \in \NUS(T)$, $au$ is unique in $T$, $ub$ is unique in $T$, and $u$ is repeating in $T$.
  Also, since $aub \not\in \NUS(Tc)$, $au$ or $ub$ becomes repeating in $Tc$.
  See also Figure~\ref{fig:deletedNOcc}.
  Note that $au$ and $ub$ cannot be repeating in $Tc$, because if we assume they become repeating in $Tc$ simultaneously,
  then both $au$ and $ub$ are suffixes of $Tc$ and this implies $au = ub$, which contradicts that $au$ is unique in $T$.
  \begin{figure}[tbp]
    \centering
    \includegraphics[width=0.8\linewidth]{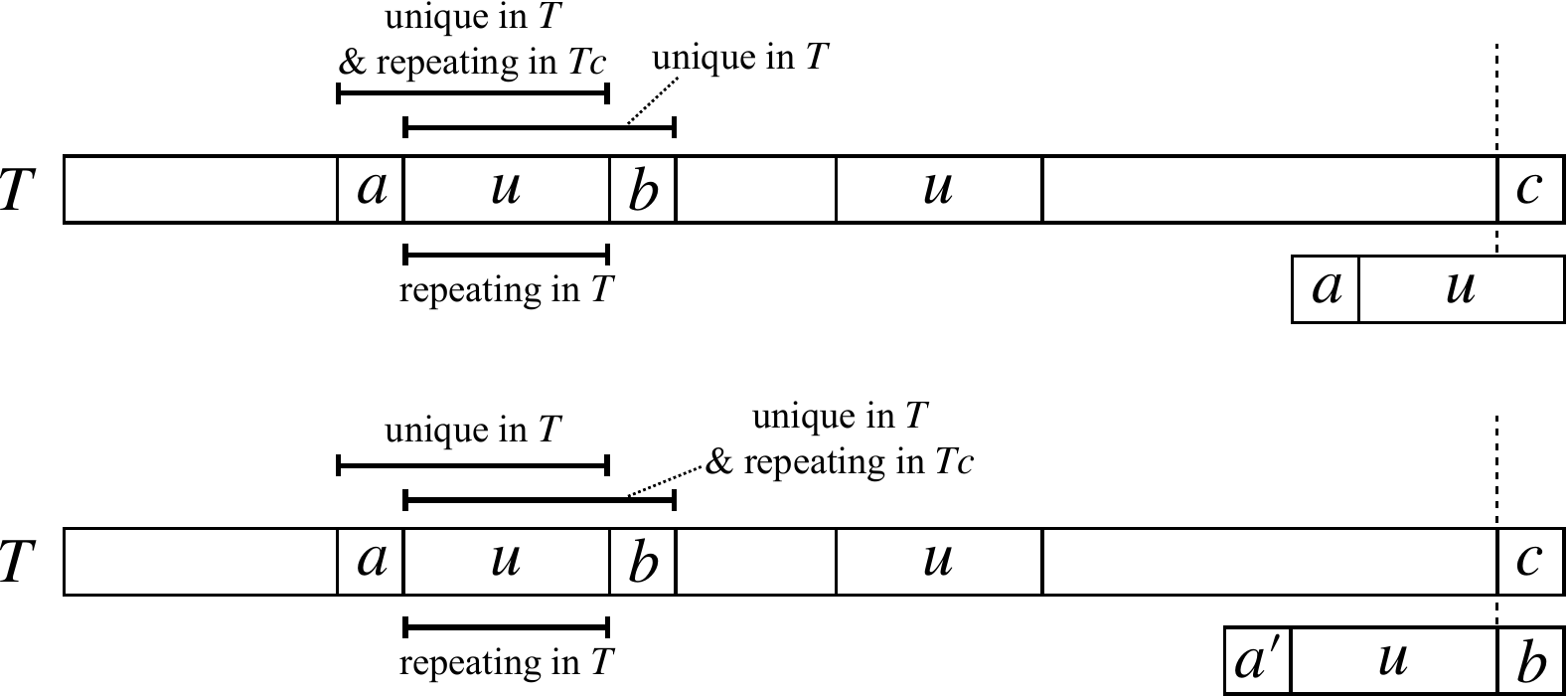}
    \caption{Illustration for Lemma~\ref{lem:deletedNOcc}. The upper part depicts the case where $au$ is repeating in $Tc$. The lower part depicts the case where $ub$ is repeating in $Tc$}
    \label{fig:deletedNOcc}
  \end{figure}

  If $au$ is repeating in $Tc$, then $au$ is a suffix of $Tc$.
  Since $au$ is unique in $T$, $\#\occ_{Tc}(au) = 2$ holds.
  Also, since $u$ is repeating in $T$, $\#\occ_{Tc}(u) \ge 3$ holds.
  These imply that $au = \sqsuf(Tc)$.

  If $ub$ is repeating in $Tc$, then $ub$ is a suffix of $Tc$.
  Since $ub$ is unique in $T$, $\#\occ_{Tc}(ub) = 2$ holds.
  Now let $a' = T[|T|-|u|]$ be the preceding character of the suffix $ub$ of $Tc$.
  If we assume that $a'ub$ is repeating in $Tc$,
  then the other occurrence $a'ub$ matches the occurrence of $aub$ since $ub$ is unique in $T$.
  This implies that $a' = a$, which contradicts $au$ is unique in $T$.
  Thus, $ub = \lrsuf(Tc)$.
\end{proof}

\begin{lemma}\label{lem:addedNOccNonsuffix}
  Suppose that $aub \in \NUS(Tc)\setminus\NUS(T)$ be a net unique substring in $Tc$
  such that $aub$ is not a suffix of $Tc$
  where $a, b, c \in \Sigma$ and $u \in \Sigma^\star$.
  Then, $u = \lrsuf(Tc)$ and $\#\occ_{Tc}(u) = 2$ hold.
\end{lemma}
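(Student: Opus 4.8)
The plan is to exploit the structural fact, noted right after Fact~\ref{obs:occ}, that appending $c$ changes the occurrence count of a string only if that string is a suffix of $Tc$, and in that case it increases the count by exactly one. I would combine this with the defining conditions of a net unique substring (namely, for $aub = Tc[i-1..j+1] \in \NUS(Tc)$ with $u = Tc[i..j]$, that $u$ is repeating in $Tc$ while $au$ and $ub$ are unique in $Tc$).

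First I would locate the occurrence $[i-1..j+1]$ of $aub$ in $Tc$. Since $aub$ is assumed \emph{not} to be a suffix of $Tc$, this occurrence ends at a position $j+1 < |Tc|$, so every position it touches lies in $T$; hence $aub$, and with it $au$, $ub$, and $u$, already occurs in $T$ via occurrences that end strictly before the last position. Because $au$ and $ub$ are unique in $Tc$, their unique occurrences are exactly these non-suffix occurrences, so neither $au$ nor $ub$ is a suffix of $Tc$; the remark after Fact~\ref{obs:occ} then gives that $au$ and $ub$ are already unique in $T$. Now I would invoke the hypothesis $aub \notin \NUS(T)$: since $au$ and $ub$ are unique in $T$, the only defining condition that can fail is that $u$ is not repeating in $T$, i.e.\ $u$ is unique in $T$. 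Since $u$ is repeating in $Tc$, the new occurrence must be a suffix occurrence, so $u$ is a suffix of $Tc$ and $\#\occ_{Tc}(u) = \#\occ_T(u) + 1 = 2$, which proves the count part of the claim.

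It then remains to upgrade ``$u$ is a repeating suffix'' to ``$u$ is the \emph{longest} repeating suffix.'' Here I would examine the two occurrences of $u$ in $Tc$: the one inside $aub$, which is preceded by $a$, and the suffix occurrence, preceded by some character $a'$. Because $au$ is unique and its only occurrence ends before the last position, the suffix occurrence of $u$ cannot be preceded by $a$, so $a' \neq a$. Consequently $a'u$ is precisely the suffix of $Tc$ of length $|u|+1$, and only one of the two occurrences of $u$ is preceded by $a'$, so $a'u$ is unique in $Tc$. If $u$ were a proper suffix of $\lrsuf(Tc)$, then $\lrsuf(Tc)$ would have length at least $|u|+1$ and its length-$(|u|+1)$ suffix, namely $a'u$, would be repeating in $Tc$, contradicting this uniqueness. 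Hence $u = \lrsuf(Tc)$.

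The step I expect to require the most care is the chain of suffix-or-not deductions in the middle paragraph: correctly concluding that $au$ and $ub$ are unique already in $T$ while $u$ only becomes repeating upon appending $c$ depends on precisely tracking where each occurrence ends relative to the last position and on the single-new-occurrence structure guaranteed by Fact~\ref{obs:occ}. The final left-branching argument is short, but it rests crucially on the earlier observation that $a' \neq a$.
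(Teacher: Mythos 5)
Your proposal is correct and follows essentially the same route as the paper's proof: deduce from $aub\notin\NUS(T)$ (together with Fact~\ref{obs:occ}) that $u$ must be unique in $T$, hence $u$ is a suffix of $Tc$ with $\#\occ_{Tc}(u)=2$, and then use the uniqueness of $au$ to get $a'\neq a$ and conclude $a'u$ is unique, so $u=\lrsuf(Tc)$. You merely spell out more explicitly than the paper why the conditions on $au$ and $ub$ cannot be the ones that fail in $T$.
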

\begin{proof}
  Since $aub \in \NUS(Tc)$, $au$ is unique in $Tc$, $ub$ is unique in $Tc$, and $u$ is repeating in $Tc$.
  Also, since $aub \not\in \NUS(T)$, $u$ is unique in $T$.
  Namely, $u$ occurs in $Tc$ as a suffix and $\#\occ_{Tc}(u) = 2$.
  See also Figure~\ref{fig:addedNOccNonsuffix}.
  Since $au$ is unique in $Tc$, the preceding character $a'$ of $u = T[|Tc|-|u|+1.. |Tc|]$ is not equal to $a$,
  and thus, $a'u$ is unique in $Tc$.
  Therefore, $u = \lrsuf(Tc)$ holds.
  \begin{figure}[tbp]
    \centering
    \includegraphics[width=0.8\linewidth]{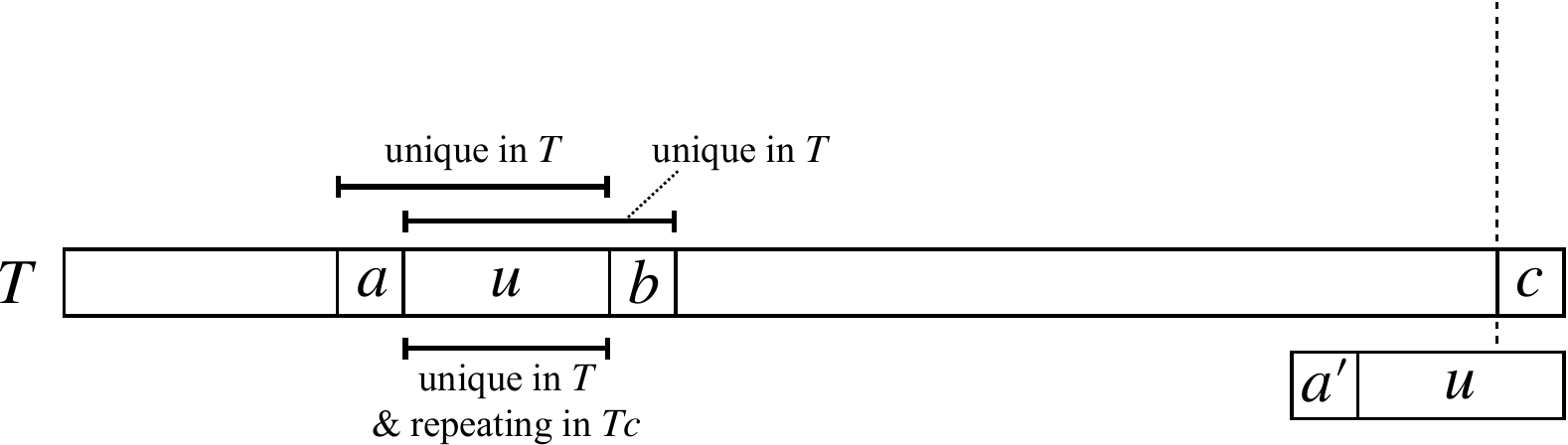}
    \caption{Illustration for Lemma~\ref{lem:addedNOccNonsuffix}. Suffix $u$ of $Tc$ is repeating in $Tc$ but longer suffix $a'u$ of $Tc$ is unique in $Tc$.}
    \label{fig:addedNOccNonsuffix}
  \end{figure}
\end{proof}

\begin{lemma}\label{lem:addedNOccsuffix}
  Suppose that $auc \in \NUS(Tc)\setminus\NUS(T)$ be a net unique substring in $Tc$
  such that $auc$ is a suffix of $Tc$ 
  where $a, c \in \Sigma$ and $u \in \Sigma^\star$.
  Then, either (i) $u = \lrsuf(T)$ or (ii) $u = c^{\mathit{exp}}$ and $\#\occ_T(u) = 1$ where $\mathit{exp} = \max\{e\mid c^e \text{~is a suffix of~} T\}$.
\end{lemma}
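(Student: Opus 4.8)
The plan is to start by unpacking the hypothesis that the suffix $auc$ of $Tc$ belongs to $\NUS(Tc)$. By the definition of net unique substrings, this yields three facts: $u$ is repeating in $Tc$, $au$ is unique in $Tc$, and $uc$ is unique in $Tc$. Moreover, since $auc$ is a suffix of $Tc$ and hence ends at position $|T|+1$, the middle string $u$ ends at position $|T|$, so $u$ (and likewise $au$) is a suffix of $T$. The whole argument then splits on whether $u$ already repeats in $T$ or is unique in $T$; I will show these two alternatives give conclusions (i) and (ii) respectively. They are exhaustive and mutually exclusive because $u$ is at least repeating in $Tc$, which justifies the ``either/or'' in the statement.

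In the first case, suppose $u$ is repeating in $T$. Then $u$ is a repeating suffix of $T$, so $|u| \le |\lrsuf(T)|$. I would argue by contradiction that equality holds. Writing $L := \lrsuf(T)$, if $u$ were a proper suffix of $L$, then $au$, being a suffix of $T$ of length $|u|+1 \le |L|$, would be a suffix of $L$; since every occurrence of $L$ carries an occurrence of its suffix $au$ ending at the same position, we get $\#\occ_{T}(au) \ge \#\occ_{T}(L) \ge 2$. Thus $au$ would be repeating in $T$, hence in $Tc$ by Fact~\ref{obs:occ}, contradicting that $au$ is unique in $Tc$. Therefore $u = \lrsuf(T)$, which is conclusion (i).

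In the second case, suppose $u$ is unique in $T$. Since $u$ is repeating in $Tc$, its second occurrence must end at the only new position $|T|+1$, so $u$ is also a suffix of $Tc$; in particular the last character of $u$ is $c$. The key step is then a periodicity observation: $u$ occurs in $Tc$ both ending at position $|T|$ and ending at position $|T|+1$, i.e.\ two occurrences whose starting positions differ by exactly one. Matching the overlapping region forces $u[k] = u[k-1]$ for every $k$, so $u = c^{|u|}$. Finally I would pin down the exponent: with $\mathit{exp} = \max\{e \mid c^e \text{ is a suffix of } T\}$, clearly $|u| \le \mathit{exp}$; and if $|u| < \mathit{exp}$, then $c^{|u|+1}$ is a suffix of $T$, so $au = c^{|u|+1}$ (both are length-$(|u|+1)$ suffixes of $T$, forcing $a=c$). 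But $c^{|u|+1}$ occurs ending at $|T|$ in $T$ and ending at $|T|+1$ in $Tc$, so $au$ would be repeating in $Tc$, again contradicting uniqueness of $au$. Hence $u = c^{\mathit{exp}}$ with $\#\occ_{T}(u) = 1$, which is conclusion (ii).

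The routine parts are the two contradiction arguments, which both upgrade a suffix relation into a repeat in order to violate the uniqueness of $au$. The one step needing genuine care is the periodicity observation in the second case: the fact that the old occurrence (ending at $|T|$) and the new occurrence (ending at $|T|+1$) of $u$ sit at \emph{consecutive} ending positions is exactly what collapses $u$ to a power of a single character. I expect this to be the main obstacle, together with verifying that the case split on ``repeating in $T$'' versus ``unique in $T$'' is clean, so that the disjunction in the statement is fully justified.
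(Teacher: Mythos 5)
Your proposal is correct and follows essentially the same route as the paper's proof: the same case split on whether $u$ is repeating or unique in $T$, with case (i) settled by the uniqueness of the suffix $au$ and case (ii) by the period-one argument giving $u = c^{|u|}$ followed by pinning the exponent via a uniqueness contradiction. You merely spell out in more detail the steps the paper states tersely (e.g., why $u$ must equal $\lrsuf(T)$ rather than a shorter repeating suffix, and why $|u| = \mathit{exp}$).
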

\begin{proof}
  Since $auc \in \NUS(Tc)$, $au$ is unique in $Tc$, $uc$ is unique in $Tc$, and $u$ is repeating in $Tc$.
  There are two cases with respect to the number of occurrences of $u$ in $T$.
  See also Figure~\ref{fig:addedNOccsuffix}.
  \begin{figure}[tbp]
    \centering
    \includegraphics[width=0.8\linewidth]{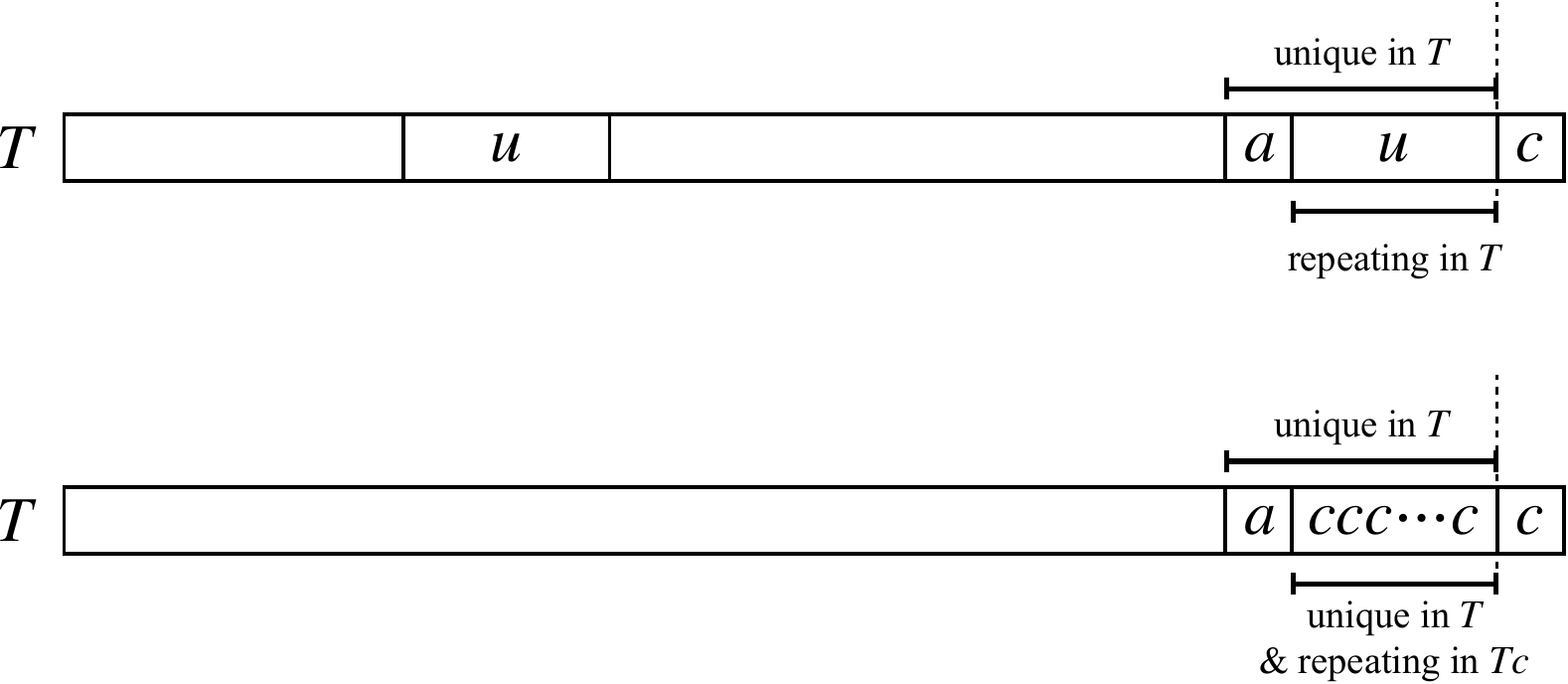}
    \caption{Illustration for Lemma~\ref{lem:addedNOccsuffix}. The upper part depicts the case where $u$ is repeating in $T$. The lower part depicts the case where $u$ is unique in $T$.}
    \label{fig:addedNOccsuffix}
  \end{figure}

  If $u$ is repeating in $T$, $u = \lrsuf(T)$ since $au$ is unique in $T$.
  If $u$ is unique in $T$, then appending $c$ causes $u$ to be repeating in $T$.
  This implies that $u = u[2..|u|]c$, i.e., $u = c^{|u|}$.
  Also, since $uc = c^{|u|+1}$ is unique in $Tc$, $a \ne c$ holds.
  Thus $|u| = \mathit{exp} = \max\{e\mid c^e \text{~is a suffix of~} T\}$.
\end{proof}

As for the differences between $\NUS(T)$ and $\NUS(cT)$,
the three following lemmas also hold by symmetry:
\begin{lemma}\label{lem:sym_deletedNOcc}
  Suppose that $aub \in \NUS(T)\setminus\NUS(cT)$ be a net unique substring in $T$
  where $a, b, c \in \Sigma$ and $u \in \Sigma^\star$.
  Then,
  (i) $au = \sqpref(cT)$ and $\#\occ_{cT}(au) = 2$ or
  (ii) $ub = \lrpref(cT)$ and $\#\occ_{cT}(ub) = 2$ hold.
\end{lemma}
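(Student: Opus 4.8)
The plan is to transcribe the proof of Lemma~\ref{lem:deletedNOcc} through the left-to-right symmetry, working with the \emph{prefixes} of $cT$ in place of the suffixes of $Tc$; the translation replaces ``suffix'' by ``prefix,'' $\sqsuf$ by $\sqpref$, and $\lrsuf$ by $\lrpref$, and leaves every occurrence count unchanged. So I would carry the same two-case skeleton over verbatim, only re-anchored at the new leftmost position of $cT$.

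First I would record the structural consequences of the hypothesis. From $aub \in \NUS(T)$ we obtain that $au$ and $ub$ are unique in $T$ and that $u$ is repeating in $T$, hence $u$ is still repeating in $cT$ by Fact~\ref{obs:occ}. Since $aub \notin \NUS(cT)$, one of $au$, $ub$ must become repeating in $cT$. Prepending $c$ creates exactly one new occurrence of any fixed substring, namely the occurrence that starts at the new leftmost position, so a substring unique in $T$ turns repeating in $cT$ precisely when it becomes a prefix of $cT$. Both $au$ and $ub$ cannot become prefixes simultaneously: they share the common length $|u|+1$, so as prefixes they would coincide and give $au = ub$, whence $au$ would already occur twice in $T$ (at the two shifted positions inside $aub$), contradicting its uniqueness there. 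Thus exactly one of the two becomes a repeating prefix. If $au$ becomes the repeating prefix, then $\#\occ_{cT}(au) = 2$ and, exactly as in the quasi-unique-suffix case of Lemma~\ref{lem:deletedNOcc}, every prefix of $cT$ strictly shorter than $au$ remains non-quasi-unique (it inherits the high multiplicity of the shorter factor, so $\#\occ_{cT}(u) \ge 3$), while $au$ itself is quasi-unique; this makes $au$ the shortest quasi-unique prefix, so $au = \sqpref(cT)$, which is case~(i).

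If instead $ub$ becomes the repeating prefix, then $\#\occ_{cT}(ub) = 2$, and it remains to see that $ub$ is the \emph{longest} repeating prefix. Mirroring the preceding-character step of Lemma~\ref{lem:deletedNOcc} as a following-character step, I would consider the character $b'$ that follows the prefix occurrence of $ub$ in $cT$ and show that the one-longer prefix $ubb'$ is unique in $cT$: any second occurrence of $ubb'$ would have to align its $ub$-part with the unique-in-$T$ occurrence of $ub$ inside $aub$, and this collision would force $b'$ to match the character following that occurrence, contradicting the uniqueness of $ub$ (equivalently of $aub$) in $T$. Hence $ubb'$ is unique, so $ub = \lrpref(cT)$, which is case~(ii). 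The main obstacle is exactly this longest-repeating-prefix step: one must pin the hypothetical second occurrence of the extended prefix onto the stored occurrence of $aub$ and verify that the resulting character identification really collides with a uniqueness already guaranteed in $T$, just as the $a'$-argument does for $\lrsuf$ in Lemma~\ref{lem:deletedNOcc}. Everything else is a faithful reflection of that proof.
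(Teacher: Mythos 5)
Your setup is a correct mirror of Lemma~\ref{lem:deletedNOcc} --- $u$ stays repeating, exactly one of $au$, $ub$ becomes repeating in $cT$, and the new occurrence must be a prefix of $cT$ --- but both of your case conclusions are wrong, because the naive dictionary ``$\sqsuf\mapsto\sqpref$, $\lrsuf\mapsto\lrpref$ with $au$ and $ub$ kept in place'' is not the true reversal symmetry. Reversal sends the net unique substring $aub$ of $T$ to $bu^{R}a$ in $T^{R}$ and prepending to appending, so the \emph{left}-extension $au$ plays the role of the \emph{right}-extension in the reversed string: case~(i) of Lemma~\ref{lem:deletedNOcc} translates to ``$ub=\sqpref(cT)$'' and case~(ii) to ``$au=\lrpref(cT)$'' --- the pairing is swapped relative to what you argue. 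Concretely, your case~(i) fails because when $au$ is a prefix of $cT$, the shorter prefixes of $cT$ are prefixes of $au$ (such as the single character $a$), not suffixes of it; $u$ is not among them, so the bound $\#\occ_{cT}(u)\ge 3$ --- which moreover need not hold here, since the occurrence of $u$ at position $2$ of $cT$ lies inside $T$ and is not new --- says nothing about them. Take $T=\mathtt{babc}$, $a=\mathtt{a}$, $u=\mathtt{b}$, $b=\mathtt{c}$, $c=\mathtt{a}$: then $\mathtt{abc}\in\NUS(T)\setminus\NUS(\mathtt{ababc})$ and $\#\occ_{cT}(au)=2$, yet $\sqpref(cT)=\mathtt{a}$, while in fact $au=\mathtt{ab}=\lrpref(cT)$. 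Your case~(ii) fails because the claimed collision yields no contradiction: aligning a putative second occurrence of $ubb'$ with the unique occurrence of $ub$ in $T$ merely \emph{determines} $b'$ as the character following that occurrence, which is perfectly consistent --- unlike in Lemma~\ref{lem:deletedNOcc}, where $a'=a$ manufactured a second occurrence of $au$ inside $T$. Take $T=\mathtt{cdabcdb}$, $c=\mathtt{b}$: then $ub=\mathtt{bc}$ becomes a repeating prefix of $cT=\mathtt{bcdabcdb}$, but $\lrpref(cT)=\mathtt{bcd}\ne ub$; instead $ub=\sqpref(cT)$.

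The correct mirrored proof runs as follows. If $ub$ becomes repeating, it is a prefix of $cT$ with $\#\occ_{cT}(ub)=2$; then $u$ is also a prefix of $cT$, that prefix occurrence is new, so $\#\occ_{cT}(u)\ge 3$, and every shorter prefix of $cT$ is a prefix of $u$ and occurs at least three times, giving $ub=\sqpref(cT)$. If $au$ becomes repeating, it is a prefix of $cT$ with $\#\occ_{cT}(au)=2$; letting $b'$ be the character following the prefix occurrence, a second occurrence of $aub'$ would lie inside $T$ and align with the unique occurrence $[i-1..j]$ of $au$, forcing $b'=b$ and hence making $ub$ a prefix of $T$ --- a second occurrence of $ub$ in $T$ besides $[i..j+1]$ (which has $i\ge 2$), contradicting its uniqueness in $T$; so $aub'$ is unique and $au=\lrpref(cT)$. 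Note that both examples above falsify alternatives (i) and (ii) simultaneously \emph{as printed}: the statement in the paper carries the same $au$/$ub$ swap as your proof and should read ``(i) $ub=\sqpref(cT)$ and $\#\occ_{cT}(ub)=2$, or (ii) $au=\lrpref(cT)$ and $\#\occ_{cT}(au)=2$.'' The paper gives no written proof here, appealing only to symmetry after Lemma~\ref{lem:deletedNOcc}, and carrying that symmetry out carefully is precisely the step your attempt inherited incorrectly rather than detected.
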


\begin{lemma}\label{lem:sym_addedNOccNonprefix}
  Suppose that $aub \in \NUS(cT)\setminus\NUS(T)$ be a net unique substring in $cT$
  such that $aub$ is not a prefix of $cT$
  where $a, b, c \in \Sigma$ and $u \in \Sigma^\star$.
  Then, $u = \lrpref(cT)$ and $\#\occ_{cT}(u) = 2$ hold.
\end{lemma}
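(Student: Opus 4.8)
The plan is to prove this as the exact left-to-right mirror of Lemma~\ref{lem:addedNOccNonsuffix}, reading $cT$ from right to left so that prepending $c$ plays the role that appending $c$ played there. First I would unpack the hypothesis $aub \in \NUS(cT)$ into its three defining conditions: $au$ is unique in $cT$, $ub$ is unique in $cT$, and $u$ is repeating in $cT$. I would also invoke the mirror image of Fact~\ref{obs:occ}, namely $\#\occ_T(w) \le \#\occ_{cT}(w)$, together with its consequence that any $w$ which is unique in $cT$ and is not a prefix of $cT$ must already be unique in $T$.

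Next I would establish that the promotion of $aub$ to a net unique substring is caused precisely by $u$ becoming repeating. Since $aub$ is not a prefix of $cT$ and $au$ is unique in $cT$, the occurrence of $au$ cannot start at position $1$ (otherwise the unique occurrence of $aub$ would start at position $1$ and be a prefix), so $au$ is not a prefix of $cT$; by the mirror Fact, $au$ is unique in $T$. The same reasoning applied to $ub$, whose single occurrence sits strictly inside the non-prefix occurrence of $aub$ and hence does not start at position $1$, shows $ub$ is unique in $T$. If $u$ were also repeating in $T$, then $au$, $ub$ unique and $u$ repeating in $T$ would force $aub \in \NUS(T)$, contradicting the hypothesis. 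Hence $u$ is unique in $T$, and since $\#\occ_{cT}(u) \ge 2$, the one extra occurrence must be the one created at the front; that is, $u$ is a prefix of $cT$ and $\#\occ_{cT}(u) = 2$.

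It remains to upgrade ``$u$ is a repeating prefix'' to $u = \lrpref(cT)$. Here I would mirror the preceding-character argument of Lemma~\ref{lem:addedNOccNonsuffix} by examining the character $b'$ that immediately follows the prefix occurrence of $u$ in $cT$. The other occurrence of $u$ (the one inherited from $T$, lying inside $aub$) is followed by $b$, and $ub$ is unique in $cT$; therefore $b' \ne b$, for otherwise $ub$ would occur twice. Consequently $ub'$ occurs only at the front, so $ub'$ is unique in $cT$, meaning $u$ admits no longer repeating right-extension as a prefix and hence $u = \lrpref(cT)$.

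I expect the only genuine subtlety to be the bookkeeping of occurrence positions in the middle step: arguing cleanly that neither $au$ nor $ub$ is a prefix of $cT$ (so that the mirror Fact is applicable), and that the single new occurrence of $u$ is exactly the prefix one. Everything else is a symbol-for-symbol reflection of the append-based proof, with suffixes replaced by prefixes, $\lrsuf$ replaced by $\lrpref$, and preceding characters replaced by following characters.
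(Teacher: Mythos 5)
Your proof is correct and takes essentially the same route as the paper: the paper establishes Lemma~\ref{lem:sym_addedNOccNonprefix} only by appeal to symmetry with Lemma~\ref{lem:addedNOccNonsuffix}, and your argument is exactly that left-to-right mirror (new occurrence of $u$ forced to be the prefix one, then the following-character $b'\ne b$ argument showing $ub'$ is unique so $u=\lrpref(cT)$). You even fill in a detail the paper glosses over, namely why $au$ and $ub$ stay unique in $T$ so that the failed condition must be the repetitiveness of $u$.
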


\begin{lemma}\label{lem:sym_addedNOccprefix}
  Suppose that $auc \in \NUS(cT)\setminus\NUS(T)$ be a net unique substring in $cT$
  such that $auc$ is a prefix of $cT$ 
  where $a, c \in \Sigma$ and $u \in \Sigma^\star$.
  Then, either (i) $u = \lrpref(T)$ or (ii) $u = c^{\mathit{exp}}$ and $\#\occ_T(u) = 1$ where $\mathit{exp} = \max\{e\mid c^e \text{~is a prefix of~} T\}$.
\end{lemma}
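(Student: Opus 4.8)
The plan is to obtain Lemma~\ref{lem:sym_addedNOccprefix} as the left-right mirror image of the already-proven Lemma~\ref{lem:addedNOccsuffix}, by pushing the whole configuration through the string-reversal map; the same reduction simultaneously yields Lemmas~\ref{lem:sym_deletedNOcc} and \ref{lem:sym_addedNOccNonprefix} from Lemmas~\ref{lem:deletedNOcc} and \ref{lem:addedNOccNonsuffix}, which is what the phrase ``by symmetry'' abbreviates. First I would record the reversal dictionary. Writing $w^{R}$ for the reverse of a string $w$, reversal is an involution that preserves lengths and occurrence counts, $\#\occ_{X}(w)=\#\occ_{X^{R}}(w^{R})$, exchanges prefixes with suffixes, and satisfies $(cT)^{R}=T^{R}c$. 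Consequently $\lrpref(T)^{R}=\lrsuf(T^{R})$ and $\sqpref(T)^{R}=\sqsuf(T^{R})$, while every unary power is reversal-invariant, $(c^{e})^{R}=c^{e}$, so that $c^{e}$ is a prefix of $T$ exactly when $c^{e}$ is a suffix of $T^{R}$.

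The key structural point, which I would verify next, is that net uniqueness is itself left-right symmetric. If $w=aub\in\NUS(X)$ with $a,b\in\Sigma$ and $u\in\Sigma^{\star}$, then $au$ and $ub$ are unique in $X$ and $u$ is repeating in $X$; reversing, $w^{R}=b\,u^{R}a$ has $(ub)^{R}=b\,u^{R}$ and $(au)^{R}=u^{R}a$ unique in $X^{R}$ and $u^{R}$ repeating in $X^{R}$, so $w^{R}\in\NUS(X^{R})$ with repeat $u^{R}$. Hence reversal is a bijection between $\NUS(X)$ and $\NUS(X^{R})$ that reverses the underlying repeat, and it carries the predicate ``is a prefix of $cT$'' to ``is a suffix of $(cT)^{R}=T^{R}c$''. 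Applying this to the hypothesis of Lemma~\ref{lem:sym_addedNOccprefix}: since the given net unique substring is a prefix of $cT$ it begins with the freshly prepended character $c$, so its reverse lies in $\NUS(T^{R}c)\setminus\NUS(T^{R})$, is a suffix of $T^{R}c$, and ends in the freshly appended character $c$.

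This is exactly the configuration governed by Lemma~\ref{lem:addedNOccsuffix} applied to the string $T^{R}$ with appended character $c$. That lemma then forces the repeat $u^{R}$ of the reversed substring to satisfy either $u^{R}=\lrsuf(T^{R})$, or $u^{R}=c^{\mathit{exp}'}$ with $\#\occ_{T^{R}}(u^{R})=1$ and $\mathit{exp}'=\max\{e\mid c^{e}\text{ is a suffix of }T^{R}\}$. Reversing the conclusion through the dictionary turns the first alternative into $u=\lrpref(T)$, and the second into $u=c^{\mathit{exp}}$ with $\#\occ_{T}(u)=1$ and $\mathit{exp}=\max\{e\mid c^{e}\text{ is a prefix of }T\}$, because $c^{\mathit{exp}'}$ is fixed by reversal, occurrence counts are preserved, and suffixes of $T^{R}$ correspond to prefixes of $T$. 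These are precisely the two alternatives in the statement.

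The only step that requires genuine care is the structural claim of the second paragraph: that reversal is a $\NUS$-preserving involution which simultaneously swaps the prefix and suffix roles and reverses the repeat. Everything afterward is a mechanical substitution, once one checks that each ingredient appearing in Lemma~\ref{lem:addedNOccsuffix} --- the longest repeating suffix, the maximal unary power $c^{\mathit{exp}}$, the occurrence count, and the suffix/prefix position --- is covariant under reversal in the manner tabulated above. I expect no new combinatorial obstacle beyond this bookkeeping, which is exactly why all three mirror lemmas can be dispatched with a single appeal to symmetry.
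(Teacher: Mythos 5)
Your reduction via string reversal is exactly the symmetry the paper invokes (it states this lemma with no proof beyond ``by symmetry'' from Lemma~\ref{lem:addedNOccsuffix}), and your dictionary --- reversal preserving occurrence counts and $\NUS$ membership, exchanging prefixes with suffixes, fixing $c^{e}$, and sending $\lrpref(T)$ to $\lrsuf(T^{R})$ --- correctly transports both alternatives of the conclusion. The argument is correct and is essentially the paper's own, just written out in full.
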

\section{Algorithms}\label{sec:algo}

In this section, we present our online/sliding algorithms for computing
extended net occurrences of repeats for a given string.

\subsection{Online algorithm based on implicit suffix trees}

By Lemmas~\ref{lem:deletedNOcc},~\ref{lem:addedNOccNonsuffix} and~\ref{lem:addedNOccsuffix},
we can compute $\ENO(Tc)$ from $\ENO(T)$ with Algorithm~\ref{alg:online}
since $[p.. q] \in \ENO(T)$ iff $T[p.. q] \in \NUS(T)$.
We encode each element $[i..j] \in \ENO(T)$ by a pair $(i,j)$ so that $\ENO(T)$ can be stored in $O(\#\ENO(T))$ space.
Note that $\#\ENO(Tc) - \#\ENO(T) \ne -2$ while $\#(\ENO(Tc)\setminus\ENO(T))$ can be $2$.
See lines~9--15 of Algorithm~\ref{alg:online}.
The size of $E$ decreases by $1$ if we enter line~12, however, the size increases by $1$ at line~14.
Thus $-1 \le \#\ENO(Tc) - \#\ENO(T) \le 2$.
\begin{algorithm}[th]
  \caption{Updating extended net occurrences when a character is appended}\label{alg:online}
  \begin{algorithmic}[1]
    \Require ${E} = \ENO(T)$
    \Ensure  ${E} = \ENO(Tc)$
    \Procedure{AppendChar}{String $T$, character $c$, set ${E}$}
    \If{$\#\occ_{Tc}(\sqsuf(Tc)) = 2$} \label{line:2}
    \State $(i, j) \leftarrow$ the non-suffix occurrence of $\sqsuf(Tc)$ in $Tc$. \label{line:3}
    \If{$(i, j+1) \in {E}$} \label{line:4}
    \State ${E} \leftarrow {E} \setminus \{(i, j+1)\}$ \Comment{by (i) of Lemma~\ref{lem:deletedNOcc}} \label{line:5}
    \EndIf
    \EndIf
    \If{$\#\occ_{Tc}(\lrsuf(Tc)) = 2$} \label{line:8}
    \State $(i', j') \leftarrow$ the non-suffix occurrence of $\lrsuf(Tc)$ in $Tc$. \label{line:9}
    \If{$i' > 1$}
    \If{$(i'-1, j') \in {E}$} \label{line:11}
    \State ${E} \leftarrow {E} \setminus \{(i'-1, j')\}$ \Comment{by (ii) of Lemma~\ref{lem:deletedNOcc}} \label{line:12}
    \EndIf
    \State ${E} \leftarrow {E} \cup \{(i'-1, j'+1)\}$ \Comment{by Lemma~\ref{lem:addedNOccNonsuffix}} \label{line:14}
    \EndIf
    \EndIf
    \If{$|\lrsuf(Tc)| \le |\lrsuf(T)|$} \Comment{$\iff \#\occ_{Tc}(\lrsuf(T)c) = 1$} \label{line:17}
    \State ${E} \leftarrow {E} \cup \{(|T|-|\lrsuf(T)|, |Tc|)\}$ \Comment{by (i) of Lemma~\ref{lem:addedNOccsuffix}} \label{line:18}
    \ElsIf{$\#\occ_{T}(c^{\mathit{exp}}) = 1$ where $\mathit{exp} = \max\{e\mid c^e \text{~is a suffix of~} T\}$} \label{line:19}
    \State ${E} \leftarrow {E} \cup \{(|T|-\mathit{exp}, |Tc|)\}$ \Comment{by (ii) of Lemma~\ref{lem:addedNOccsuffix}} \label{line:20}
    \EndIf
    \EndProcedure
  \end{algorithmic}
\end{algorithm}

From Algorithm~\ref{alg:online}, we obtain the following theorem:
\begin{theorem}\label{thm:main}
  Given string $T$, $E = \ENO(T)$, and character $c$,
  we can compute $\ENO(Tc)$ in $O(t(n))$ time using $O(s(n))$ space
  if each of the following operations can be executed in $t(n)$ time within $s(n)$ space:
  \begin{enumerate}
    \item Determine if $\#\occ_{Tc}(\sqsuf(Tc)) = 2$.
    \item Find the non-suffix occurrence of $\sqsuf(Tc)$ in $Tc$ when $\#\occ_{Tc}(\sqsuf(Tc)) = 2$.
    \item Determine if $\#\occ_{Tc}(\lrsuf(Tc)) = 2$.
    \item Find the non-suffix occurrence of $\lrsuf(Tc)$ in $Tc$ when $\#\occ_{Tc}(\lrsuf(Tc)) = 2$.
    \item Compute the lengths of $\lrsuf(T)$ and $\lrsuf(Tc)$.
    \item Compute $\mathit{exp} = \max\{e\mid c^e \text{~is a suffix of~} T\}$ and determine if $\#\occ_{T}(c^{\mathit{exp}}) = 1$.
    \item Determine if $(i, j) \in E$ for given pair $(i, j)$.
    \item Insert an element $(i, j)$ into $E$ for given pair $(i, j)$.
    \item Delete an element $(i, j)$ from $E$ if $(i, j) \in E$ for given pair $(i, j)$.
  \end{enumerate}
\end{theorem}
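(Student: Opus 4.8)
The plan is to establish two things about Algorithm~\ref{alg:online}: that it transforms $E=\ENO(T)$ into exactly $\ENO(Tc)$, and that it invokes each of the nine listed operations only $O(1)$ times, whence the time bound $O(t(n))$ and space bound $O(s(n))$ follow at once. The starting point is the remark that appending a character leaves every left-anchored index unchanged, so an occurrence $[p..q]$ lying in both $\ENO(T)$ and $\ENO(Tc)$ is encoded by the same pair $(p,q)$; by the bijection between $\NUS$ and $\ENO$, updating $E$ therefore reduces to deleting the pairs of $\ENO(T)\setminus\ENO(Tc)$ and inserting those of $\ENO(Tc)\setminus\ENO(T)$. I would then argue that each of these two difference sets has at most two elements, and that the algorithm's branches enumerate precisely those elements.

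For the deletions I would invoke Lemma~\ref{lem:deletedNOcc}: every $aub\in\NUS(T)\setminus\NUS(Tc)$ falls under case~(i), $au=\sqsuf(Tc)$, or case~(ii), $ub=\lrsuf(Tc)$, and in each case the designated string has exactly one non-suffix occurrence (its multiplicity in $Tc$ being $2$, one occurrence being the suffix), so at most one candidate is produced per case. Case~(i) maps to lines~\ref{line:2}--\ref{line:5}: from the non-suffix occurrence $(i,j)$ of $\sqsuf(Tc)=au$ the candidate deleted pair is $(i,j+1)$, where $b=Tc[j+1]$ exists because $j<|Tc|$. Case~(ii) maps to lines~\ref{line:8}--\ref{line:12}: from the non-suffix occurrence $(i',j')$ of $\lrsuf(Tc)=ub$ the candidate is $(i'-1,j')$, meaningful only when $i'>1$ -- exactly the guard in the algorithm. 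Since the lemma supplies only necessary conditions, each tentative deletion is protected by the membership tests $(i,j+1)\in E$ and $(i'-1,j')\in E$, so that only genuine former net occurrences are removed.

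For the insertions I would split according to whether the new net unique substring is a suffix of $Tc$. The non-suffix additions are governed by Lemma~\ref{lem:addedNOccNonsuffix} and hinge on the same event as deletion case~(ii), namely $\#\occ_{Tc}(\lrsuf(Tc))=2$; the new occurrence is $(i'-1,j'+1)$ at line~\ref{line:14}. Here I would also verify the converse that line~\ref{line:14} silently relies on: whenever $\lrsuf(Tc)$ has two occurrences and $i'>1$, the string $Tc[i'-1..j'+1]$ really is a new net unique substring. This holds because $\lrsuf(Tc)$ is the \emph{longest} repeating suffix, so its one-character left extension at the suffix occurrence is unique; a short argument then forces the preceding character $a=Tc[i'-1]$ to differ from that extension and shows both $au$ and $ub$ are unique while $u$ is repeating in $Tc$ yet was unique in $T$, so the inserted pair is new and never spurious. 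The suffix additions are governed by Lemma~\ref{lem:addedNOccsuffix} and realized by lines~\ref{line:17}--\ref{line:20}: case~(i), $u=\lrsuf(T)$, is detected by the length test $|\lrsuf(Tc)|\le|\lrsuf(T)|$ (equivalently $\#\occ_{Tc}(\lrsuf(T)c)=1$), and case~(ii), $u=c^{\mathit{exp}}$, by $\#\occ_T(c^{\mathit{exp}})=1$; in both the inserted pair ends at $|Tc|$ and starts at $|T|-|\lrsuf(T)|$ or $|T|-\mathit{exp}$, and the two cases are mutually exclusive because one requires $u$ repeating and the other $u$ unique in $T$.

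Reading off the control flow then gives the complexity: lines~\ref{line:2}--\ref{line:3} use operations~1--2, lines~\ref{line:8}--\ref{line:9} use operations~3--4, lines~\ref{line:17} and~\ref{line:19} use operations~5--6, and the membership, insert, and delete steps use operations~7--9, each invoked a constant number of times; summing yields $O(t(n))$ time and $O(s(n))$ space, and confirms $-1\le\#\ENO(Tc)-\#\ENO(T)\le 2$. I expect the main obstacle to be the bookkeeping that bridges the \emph{string-level} statements of the three lemmas with the \emph{index-level} pair manipulations of the algorithm: one must translate each identity such as ``$au=\sqsuf(Tc)$'' into the correct pair arithmetic, place the boundary guards ($i'>1$, $j<|Tc|$, suffix versus non-suffix) exactly, and close the necessary-versus-sufficient gap either by a membership guard (deletions) or by the converse argument above (the unconditional non-suffix insertion). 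A representative delicate case is when the non-suffix occurrence of $\lrsuf(Tc)$ abuts its suffix occurrence, so that line~\ref{line:14} itself inserts a suffix; one then checks that lines~\ref{line:19}--\ref{line:20} insert the identical pair, so the set semantics absorb the redundancy harmlessly. Everything else is a direct constant-time traversal of the algorithm.
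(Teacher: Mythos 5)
Your proposal is correct and takes essentially the same route as the paper: the paper's own proof of Theorem~\ref{thm:main} is just the line-by-line mapping of Algorithm~\ref{alg:online} onto the nine operations (each invoked a constant number of times), with the correctness of the algorithm itself delegated to Lemmas~\ref{lem:deletedNOcc}--\ref{lem:addedNOccsuffix} and the discussion preceding the pseudocode. You additionally spell out the sufficiency direction (e.g.\ that the insertion at line~\ref{line:14} always yields a genuine new net unique substring, and that the membership guards close the necessary-versus-sufficient gap for deletions), which the paper leaves implicit; this is a welcome tightening rather than a different approach.
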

\begin{proof}
  Look at Algorithm~\ref{alg:online}.
  Line~\ref{line:2} uses operation 1,
  Line~\ref{line:3} uses operation 2,
  Lines~\ref{line:4} and~\ref{line:11} use operation 7,
  Lines~\ref{line:5} and~\ref{line:12} use operation 9,
  Line~\ref{line:8} uses operation 3, 
  Line~\ref{line:9} uses operation 4, 
  Lines~\ref{line:14}, \ref{line:18}, and~\ref{line:20} use operation 8, 
  Line~\ref{line:17} uses operation 5, and
  Line~\ref{line:19} uses operation 6.
  Thus, Algorithm~\ref{alg:online} consists of the above nine operations and basic arithmetic operations.
\end{proof}
Note that the correctness of Theorem~\ref{thm:main} does not depend on the data structure used.
The next lemma holds if we utilize the \emph{implicit suffix tree} by Ukkonen~\cite{Ukkonen95}.

\begin{lemma}\label{lem:Ukkonen}
  Based on Ukkonen's left-to-right online suffix tree construction~\cite{Ukkonen95}, we can design a data structure $\mathcal{D}_T$ of size $O(|T|)$
  that supports all nine operations of Theorem~\ref{thm:main} in constant time.
  The data structure $\mathcal{D}_T$ can be updated to $\mathcal{D}_{Tc}$
  in amortized $O(\log\sigma)$ time where $c$ is a character.
\end{lemma}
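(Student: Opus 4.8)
The plan is to equip Ukkonen's implicit suffix tree $\STree'(Tc)$ with suffix links and a constant number of auxiliary pointers, and to read off every quantity of Theorem~\ref{thm:main} from the \emph{active point}, which by construction is the locus of $\lrsuf(Tc)$. Maintaining the active point together with its string depth is exactly what Ukkonen's update does in amortized $O(\log\sigma)$ time per appended character, so Operation~5 is immediate: the depth of the active point before the update equals $|\lrsuf(T)|$ and after the update equals $|\lrsuf(Tc)|$ (and the condition on line~\ref{line:17} is just the comparison of these two numbers).

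The key structural observation I would isolate first concerns occurrence counting inside the \emph{implicit} tree, where the number of leaves below a locus does not directly count occurrences. Writing $\alpha=\lrsuf(Tc)$, no suffix of $Tc$ strictly longer than $\alpha$ can have $\alpha$ as a prefix, so there are no implicit suffix-ends strictly below the active point; hence $\#\occ_{Tc}(\alpha)=1+(\text{number of leaves below the active point})$, the $+1$ accounting for the suffix occurrence represented by the active point itself. Consequently $\#\occ_{Tc}(\alpha)=2$ holds iff the active point lies strictly inside an edge whose lower endpoint is a leaf, which is an $O(1)$ test given the active point. This settles Operation~3, and since the start position stored at that leaf is at most $|Tc|-|\alpha|$, reading it yields the non-suffix occurrence $[\,i',\,i'+|\alpha|-1\,]$ for Operation~4 in $O(1)$.

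For the $\sqsuf$-side I would first note that, because occurrence counts are monotone non-increasing along the suffixes ordered by length, $\#\occ_{Tc}(\sqsuf(Tc))=2$ holds exactly when $\#\occ_{Tc}(\lrsuf(Tc))=2$; thus Operation~1 reuses the Boolean of Operation~3. For Operation~2 it remains to locate the unique non-suffix occurrence of $\sqsuf(Tc)$. Since $\sqsuf(Tc)$ is a suffix of $\alpha$ and $\#\occ_{Tc}(\sqsuf(Tc))=2$, its non-suffix occurrence must be right-aligned with the non-suffix occurrence of $\alpha$ already computed, so it suffices to maintain the length $|\sqsuf(Tc)|$; the occurrence is then $[\,i'+|\alpha|-|\sqsuf(Tc)|,\,i'+|\alpha|-1\,]$. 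I would maintain the locus of $\sqsuf(Tc)$ as a second pointer that, after each Ukkonen step, is repositioned along the suffix-link chain emanating from the active point to the shortest suffix whose occurrence count is still at most $2$, charging the elementary moves against the one-unit-per-character growth of $|T|$ so that they amortize to $O(1)$ moves, each costing $O(\log\sigma)$. This maintenance is the part I expect to be the main obstacle: unlike $\lrsuf$, the string $\sqsuf$ is not a native quantity of Ukkonen's construction, detecting the boundary between count $2$ and count $\ge 3$ along the suffix chain requires a local occurrence-count certificate away from the active point, and the amortized analysis of the pointer's movement has to be argued with care.

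The remaining operations are routine. For Operation~6, I would keep the length of the maximal run of the last character as a suffix, which is updated in $O(1)$ per character and equals $\mathit{exp}$ whenever the appended character matches; the uniqueness test reduces to an $O(1)$ comparison, because a suffix of $T$ is unique iff its length exceeds $|\lrsuf(T)|$, so $\#\occ_T(c^{\mathit{exp}})=1$ iff $\mathit{exp}>|\lrsuf(T)|$. For Operations~7--9 I would exploit that each text position is the right endpoint of at most one net occurrence, and likewise the left endpoint of at most one; indexing $E$ by, say, the left endpoint in a length-$|T|$ array supports membership, insertion, and deletion in $O(1)$ worst-case time within $O(|T|)$ space. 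Collecting these, all nine operations run in $O(1)$ time, the auxiliary data (suffix links, leaf markers, the two pointers, the run length, and the endpoint array) fit in $O(|T|)$ space, and the whole structure inherits Ukkonen's amortized $O(\log\sigma)$ update, giving $\mathcal{D}_{Tc}$ from $\mathcal{D}_{T}$.
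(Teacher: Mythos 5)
Your proposal is correct and follows essentially the same route as the paper: Ukkonen's implicit suffix tree enhanced with the active point for $\lrsuf$ and a secondary pointer for $\sqsuf$, occurrence counts read off from the leaves below these loci, and $O(1)$-time set operations on $E$ justified by the fact that distinct (extended) net occurrences cannot share an endpoint. The only differences are cosmetic --- you reduce operations~1 and~2 to the $\lrsuf$-side machinery plus $|\sqsuf(Tc)|$ and store $E$ in a position-indexed array rather than linking its elements to leaves, and the ``secondary active point'' maintenance you rightly flag as the delicate step is exactly what the paper outsources to the cited sliding-MUS construction of Mieno et al.~\cite{MienoFNIBT22}.
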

\begin{proof}
  We employ an implicit suffix tree~\cite{Ukkonen95} of online string $T$
  enhanced with the \emph{active point}, which represents $\lrsuf(T)$, and the \emph{secondary active point}, which represents $\sqsuf(T)$ as in~\cite{MienoFNIBT22}.
  According to~\cite{MienoFNIBT22}, such an enhanced implicit suffix tree can support operations 1--6 in $O(1)$ time.
  For the readers of this paper, we briefly explain how to perform those operations efficiently below:
  \begin{itemize}
    \item Operation 5 is obvious since we maintain the active point for every step.
    \item Operation 3 can be easily done in constant time by checking whether the active point locates on an edge towards a leaf or not.
    \item Operation 1 can be done in constant time by using the (secondary) active points (due to Lemma~1 of~\cite{MienoFNIBT22}).
    \item Operations 2 and 4 can be done by looking at the leaves under the (secondary) active points.
      For instance, look at the implicit suffix tree $\STree'(T)$ depicted in Figure~\ref{fig:stree}.
      The secondary active point (the gray star), which represents the shortest quasi-unique suffix $s = \mathtt{abbab}$, is on an edge towards a leaf.
      Further, the leaf under the secondary active point represents the suffix $\mathtt{abbabbab}$ of $T$ starting at position $5$.
      Thus, $s$ occurs at position $5$, which is the non-suffix occurrence of $s$.
      Similarly, the non-suffix occurrence of the longest repeating suffix $\mathtt{bbabbab}$ is position $3$
      since the leaf under the active point (the white star) represents the suffix of $T$ starting at position $3$. 
    \item Value $\mathit{exp}$ defined in operation 6 can be easily maintained independent of the suffix tree.
  \end{itemize}
  As for operations 7--9,
  we implement set $E = \ENO(T)$ as a set of occurrences
  where each element $(i, j) \in E$ is connected to the corresponding locus of the suffix tree.
  Since $T[i.. j]$ is unique in $T$, the locus of $T[i.. j]$ is
  either the leaf corresponding to unique suffix $T[i..|T|]$ or on the edge towards the leaf.
  Thus we can perform operations 7--9 in constant time via the leaves of the suffix tree,
  for given $(i, j)$, which represents some unique substring of $T$.

  Finally, the data structure can be maintained in amortized $O(\log\sigma)$ time:
  Basically the amortized analysis is due to~\cite{Ukkonen95}.  
  The secondary active point, which was originally proposed in~\cite{MienoFNIBT22},
  can also be maintained in a similar manner to the active point, and thus
  the amortized analysis for the secondary active point is almost the same as that for the active point in~\cite{Ukkonen95}~(see~\cite{MienoFNIBT22} for the complete proof).
\end{proof}

By wrapping up the above discussions, we obtain the following theorem:
\begin{theorem}
  We can compute the set of extended net occurrences of string $T$ of length $n$
  given in an online manner in a total of $O(n \log\sigma)$ time using $O(n)$ space.
\end{theorem}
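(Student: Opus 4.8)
The plan is to simply combine Theorem~\ref{thm:main} with Lemma~\ref{lem:Ukkonen} and to sum the per-character costs over the entire left-to-right online construction. First I would build the data structure $\mathcal{D}_{T[1..1]}$ for the first character together with an empty set $E$, and then process $T[2], T[3], \ldots, T[n]$ one character at a time. When reading the $i$-th character $c = T[i]$, I would (a) run Algorithm~\ref{alg:online} on the current structure to update $E = \ENO(T[1..i-1])$ into $E = \ENO(T[1..i])$, and then (b) advance the underlying implicit suffix tree, together with its active point (representing $\lrsuf$) and secondary active point (representing $\sqsuf$), from $\mathcal{D}_{T[1..i-1]}$ to $\mathcal{D}_{T[1..i]}$.

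Correctness of each step is immediate from the earlier results: Algorithm~\ref{alg:online} transforms $\ENO(T)$ into $\ENO(Tc)$ by Lemmas~\ref{lem:deletedNOcc},~\ref{lem:addedNOccNonsuffix}, and~\ref{lem:addedNOccsuffix}, and by Theorem~\ref{thm:main} this transformation decomposes into the nine listed operations plus a constant amount of arithmetic. By Lemma~\ref{lem:Ukkonen}, each of the nine operations runs in $O(1)$ time on $\mathcal{D}_T$, and since Algorithm~\ref{alg:online} issues only a constant number of them, each call of Algorithm~\ref{alg:online} takes worst-case $O(1)$ time, contributing $O(n)$ time in total over the whole run. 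For the maintenance of the structure itself I would invoke the amortized bound of Lemma~\ref{lem:Ukkonen}: each update $\mathcal{D}_{T[1..i-1]} \to \mathcal{D}_{T[1..i]}$ costs amortized $O(\log\sigma)$ time, so the $n$ updates together cost $O(n\log\sigma)$ time. Adding the $O(n)$ spent inside the Algorithm~\ref{alg:online} calls keeps the grand total at $O(n\log\sigma)$.

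For the space bound, the implicit suffix tree occupies $O(n)$ space, the active and secondary active points occupy $O(1)$ space, and the set $E$, stored as pairs $(i,j)$, occupies $O(\#\ENO(T[1..i])) = O(n)$ space at every prefix; hence $O(n)$ space suffices throughout. The only delicate point—more a matter of care than a genuine obstacle—is that the amortization must be charged globally across the whole run rather than per step: Ukkonen's construction guarantees that the cumulative work of advancing the active point over all $n$ characters is $O(n\log\sigma)$, and Lemma~\ref{lem:Ukkonen} extends this guarantee to the secondary active point. Because the per-character overhead of Algorithm~\ref{alg:online} is worst-case constant, it cannot perturb this amortized bound, and the claimed $O(n\log\sigma)$-time, $O(n)$-space totals follow.
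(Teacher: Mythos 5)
Your proposal is correct and follows essentially the same route as the paper, which proves this theorem simply by combining Theorem~\ref{thm:main} with Lemma~\ref{lem:Ukkonen} (``wrapping up the above discussions''); your accounting of the amortized $O(\log\sigma)$ update cost, the worst-case $O(1)$ cost per invocation of Algorithm~\ref{alg:online}, and the $O(n)$ space for the tree plus $E$ is exactly the intended argument. The only nitpick is the ordering of steps (a) and (b): several of the nine operations query properties of $Tc$ rather than $T$, so the update of the active points and the execution of Algorithm~\ref{alg:online} must be interleaved appropriately, but this is an implementation detail the paper also leaves implicit.
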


\subsection{Sliding-window algorithm based on implicit suffix trees}

By applying symmetric arguments of Theorem~\ref{thm:main}, we can design a sliding-window algorithm.
\begin{lemma}
  There exists a data structure $\mathcal{D}_T$ of size $O(|T|)$
  that supports all nine operations of Theorem~\ref{thm:main}
  in addition to their symmetric operations listed below in constant time.
  \begin{enumerate}
    \item Determine if $\#\occ_{T}(\sqpref(T)) = 2$.
    \item Find the non-prefix occurrence of $\sqpref(T)$ in $T$ when $\#\occ_{T}(\sqpref(T)) = 2$.
    \item Determine if $\#\occ_{T}(\lrpref(T)) = 2$.
    \item Find the non-prefix occurrence of $\lrpref(T)$ in $T$ when $\#\occ_{T}(\lrpref(T)) = 2$.
    \item Compute the lengths of $\lrpref(T[2..n])$ and $\lrpref(T)$.
    \item Compute $\mathit{exp'} = \max\{e\mid c^e \text{~is a prefix of~} T[2..n]\}$ and determine if $\#\occ_{T[2..n]}(c^{\mathit{exp'}}) = 1$ where $c = T[1]$.
  \end{enumerate}
  The data structure $\mathcal{D}_T$ can be updated to either
  $\mathcal{D}_{T[2..|T|]}$ or $\mathcal{D}_{Tc}$ in amortized $O(\log\sigma)$ time
  where $c$ is a character.
\end{lemma}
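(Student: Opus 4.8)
The plan is to realize $\mathcal{D}_T$ as a \emph{sliding-window suffix tree}~\cite{Larsson96,Senft05,LeonardIBM-arxiv}: the implicit suffix tree of the current window, maintained so that appending a character on the right (Ukkonen's step~\cite{Ukkonen95}) and deleting the leftmost character (removal of the oldest leaf, i.e.\ the longest suffix) each cost amortized $O(\log\sigma)$ time in $O(|T|)$ space. On top of this tree I would keep the two structures already used in Lemma~\ref{lem:Ukkonen}, namely the active point (the locus of $\lrsuf(T)$) and the secondary active point (the locus of $\sqsuf(T)$); as shown there these answer the right-end operations $1$--$9$ of Theorem~\ref{thm:main} in $O(1)$ time. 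The genuinely new ingredient is the left-end (prefix) machinery required by the six symmetric operations, so the bulk of the argument concerns those, while the changes to $\ENO$ that they trigger are dictated exactly by the symmetric Lemmas~\ref{lem:sym_deletedNOcc}--\ref{lem:sym_addedNOccprefix} (run with the roles of insertion and deletion interchanged, since left-deletion passes from $cT$ to $T$ with $c = T[1]$).

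First I would set up the \emph{prefix-side} analogues of the active points. Let $\ell$ denote the leaf representing the whole window $T$ (the longest suffix, which is always explicit and is exactly the oldest leaf). Walking from the root toward $\ell$ spells $T$, and the point at string depth $k$ on this path represents the prefix $T[1..k]$; since every prefix of a repeat is a repeat, the occurrence count along this path is non-increasing, so $\lrpref(T)$ is the deepest repeating point and $\sqpref(T)$ is the shallowest quasi-unique one. I would therefore maintain a \emph{prefix active point} at the locus of $\lrpref(T)$ and a \emph{secondary prefix active point} at the locus of $\sqpref(T)$, in perfect analogy with the suffix side. The one subtlety is that a prefix may repeat for two different reasons: either its locus is at or above a branching node (the extra occurrence is internal and is witnessed by another leaf), or the prefix is a \emph{border} of $T$ whose suffix occurrence is subsumed on the unique edge below the parent of $\ell$. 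Concretely, one checks that $|\lrpref(T)| = \max\{\,\mathrm{depth}(\mathrm{parent}(\ell)),\ |\beta|\,\}$, where $\beta$ is the longest border of $T$; reading the non-prefix occurrence (operations $2$ and $4$) is then done in $O(1)$ either from the second leaf below the relevant locus or from the suffix occurrence of $\beta$, exactly mirroring how the suffix side reads non-suffix occurrences off the leaves beneath the (secondary) active points in Lemma~\ref{lem:Ukkonen}. Prefix-side operation $6$ (the maximal power $c^{\mathit{exp}'}$ prefixing $T[2..n]$) is maintained by a single counter, just as $\mathit{exp}$ is on the suffix side.

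With the (secondary) prefix active points in place, each of the fifteen operations queries only those pointers plus a constant number of membership/insert/delete operations on $E$, so every operation runs in $O(1)$ time. Sliding the window one step then amounts to running the right-append update of Algorithm~\ref{alg:online} and, symmetrically, the left-deletion update derived from Lemmas~\ref{lem:sym_deletedNOcc}--\ref{lem:sym_addedNOccprefix}, each of which touches $E$ only a constant number of times.

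The hard part will be maintaining all four active points under both window operations within amortized $O(\log\sigma)$, because the suffix tree natively tracks suffixes, not prefixes. Under right-append the suffix-side points follow Ukkonen's~\cite{Ukkonen95} and Mieno et al.'s~\cite{MienoFNIBT22} analyses unchanged, but under left-deletion even $\lrsuf(T)$ and $\sqsuf(T)$ may change (a suffix occurring twice, once inside the discarded region, can become unique), and the prefix-side points are delicate: appending on the right can push new branching onto the edge below $\ell$ and can lengthen the longest border, while deleting on the left destroys $\ell$ altogether, forcing the prefix active point to relocate onto the root-to-leaf path of the new longest suffix. I would control the total work by a potential argument in the style of the sliding-suffix-tree analyses of~\cite{Larsson96,Senft05}: the string depth of each active point serves as the potential, downward moves on appends are paid for by the appended character, and the relocations caused by left-deletions are charged against the leaves that leave the window, so that the points move $O(n)$ times in total with an extra $O(\log\sigma)$ factor for branch selection. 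The main technical point the full proof must settle is that the longest border $\beta$ (equivalently, the deepest subsumed-suffix witness on the edge below $\ell$) can be tracked incrementally within this same budget; once it is, combining the prefix and suffix machinery yields all fifteen operations in $O(1)$ time and the claimed amortized $O(\log\sigma)$ update in $O(|T|)$ space.
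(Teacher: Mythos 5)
The paper's own proof of this lemma is a single sentence: it observes that the sliding suffix tree of Mieno et al.~\cite{MienoFNIBT22} (built precisely to compute MUSs over a sliding window, and hence to track $\lrsuf$, $\sqsuf$, $\lrpref$, $\sqpref$ of the window) already supports all fifteen operations within the claimed bounds, and cites that result. Your proposal instead tries to rebuild that data structure from scratch, and while the architecture you describe (sliding suffix tree, suffix-side active points as in Lemma~\ref{lem:Ukkonen}, plus prefix-side analogues read off the root-to-$\ell$ path) is essentially the right one --- and your identity $|\lrpref(T)| = \max\{\mathrm{depth}(\mathrm{parent}(\ell)), |\beta|\}$ with $\beta$ the longest proper border is correct, since any non-leaf-witnessed occurrence of a prefix deeper than $\mathrm{parent}(\ell)$ must come from a repeating suffix whose locus lies on the edge into $\ell$, i.e.\ from a border --- the proof as written does not close. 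You explicitly defer the crux: showing that the prefix-side loci, equivalently the longest border $\beta$ of the window, can be maintained in amortized $O(\log\sigma)$ time under \emph{both} right-appends and left-deletions. That is not a routine potential argument; the longest border can jump non-monotonically under appends, and under left-deletions the leaf $\ell$ anchoring the whole prefix-side machinery is destroyed and all four loci may need to relocate. This maintenance problem is exactly the technical content of the cited prior work, so leaving it as ``the main technical point the full proof must settle'' means the lemma is not actually proved by your argument.

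Concretely, to repair the write-up you have two options: either (a) do what the paper does and reduce the lemma to the statement of~\cite{MienoFNIBT22}, checking operation by operation that each of the fifteen queries is answered by a quantity that data structure already maintains (this is the short, safe route); or (b) if you insist on a self-contained construction, you must actually carry out the amortized analysis for the prefix active point, the secondary prefix active point, and the border witness under left-deletion, including the standard but necessary bookkeeping for edge labels that reference expired window positions. As it stands, the proposal is a correct plan with the hardest step asserted rather than proven.
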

\begin{proof}
  The sliding suffix tree data structure of~\cite{MienoFNIBT22} supports
  all the operations in amortized $O(\log\sigma)$ time using $O(d)$ space.
\end{proof}

In case we perform the deletion of the leftmost character and 
the addition of the rightmost character simultaneously,
then our algorithm works for a sliding-window of fixed size $d$.
On the other hand, our scheme is also applicable to a sliding-window of variable size.
Thus we have the following:
\begin{theorem}\label{theo:slidingENO}
  We can maintain the set of extended net occurrences for a sliding window over string $T$ of length $n$
  in a total of $O(n \log\sigma)$ time using $O(d)$ working space
  where $d$ is the maximum size of the window.
\end{theorem}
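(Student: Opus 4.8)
The plan is to maintain the set $E$ of extended net occurrences of the current window on top of the sliding suffix tree data structure $\mathcal{D}$ guaranteed by the preceding lemma, which supports all nine forward operations of Theorem~\ref{thm:main} together with their six symmetric counterparts in $O(1)$ time, and which can be advanced, under either a rightmost insertion or a leftmost deletion, in amortized $O(\log\sigma)$ time. I would implement a single rightward slide of a fixed-size window as two such updates: first a leftmost deletion and then a rightmost insertion. For a variable-size window the two update types may be interleaved in any order, so it suffices to argue correctness and cost for each kind of update in isolation.

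The rightmost insertion is handled exactly as in the online case: I invoke Algorithm~\ref{alg:online}, whose correctness is given by Lemmas~\ref{lem:deletedNOcc},~\ref{lem:addedNOccNonsuffix}, and~\ref{lem:addedNOccsuffix}, and each of whose lines is one of operations~1--9 and hence costs $O(1)$ on $\mathcal{D}$. The remaining work is to supply the symmetric algorithm for the leftmost deletion. Writing the current window as $X = cT'$ with $c = X[1]$ and $T' = X[2..|X|]$, I read the symmetric Lemmas~\ref{lem:sym_deletedNOcc},~\ref{lem:sym_addedNOccNonprefix}, and~\ref{lem:sym_addedNOccprefix} in the direction $X \to T'$. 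In this direction the insert/delete roles are mirrored relative to prepending: the elements of $\NUS(T')\setminus\NUS(X)$, characterized through $\sqpref(X)$ and $\lrpref(X)$ by Lemma~\ref{lem:sym_deletedNOcc}, must be inserted into $E$, while the elements of $\NUS(X)\setminus\NUS(T')$, characterized through $\lrpref(X)$, $\lrpref(T')$, and the prefix power $c^{\mathit{exp'}}$ by Lemmas~\ref{lem:sym_addedNOccNonprefix} and~\ref{lem:sym_addedNOccprefix}, must be deleted. Each membership test and each occurrence lookup this requires is exactly one of the six symmetric operations (for instance, the non-prefix occurrences of $\sqpref(X)$ and of $\lrpref(X)$ are read off the leaves below the corresponding loci), so the whole deletion update again costs $O(1)$ on $E$ plus the amortized $O(\log\sigma)$ for advancing $\mathcal{D}$.

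For the global bound, I observe that as the window sweeps across $T$ each of the $n$ positions is inserted at most once and deleted at most once, so the total number of window updates is $O(n)$. Each update performs a constant number of $O(1)$-time operations on $E$ and advances $\mathcal{D}$ in amortized $O(\log\sigma)$ time, giving $O(n\log\sigma)$ time overall; the working space is $O(d)$ since $\mathcal{D}$ stores only the current window of length at most $d$ and $\#E = O(d)$. The step I expect to require the most care is the symmetric deletion algorithm: unlike the explicitly stated Algorithm~\ref{alg:online}, it is only implicit in the symmetric lemmas, so one must correctly transpose the insert/delete roles and handle the prefix-power boundary case, the analog of lines~\ref{line:17}--\ref{line:20}, so that after the update $E$ equals exactly $\ENO(T')$.
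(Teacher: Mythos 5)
Your proposal is correct and follows essentially the same route as the paper: the paper proves this theorem by invoking the sliding suffix tree of Mieno et al.\ to support the nine operations of Theorem~\ref{thm:main} plus their six prefix-symmetric counterparts in $O(1)$ time with amortized $O(\log\sigma)$ updates, and then appeals to the symmetric Lemmas~\ref{lem:sym_deletedNOcc}--\ref{lem:sym_addedNOccprefix} for the leftmost-deletion step, exactly as you do. Your write-up is in fact more explicit than the paper's (which leaves the mirrored insert/delete roles and the symmetric analogue of Algorithm~\ref{alg:online} implicit), and your transposition of those roles when reading the symmetric lemmas in the direction $cT' \to T'$ is the right one.
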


\subsection{Online algorithm based on implicit CDAWGs}

The next lemma is an adaptation of Lemma~\ref{lem:Ukkonen} which uses implicit CDAWGs in place of implicit suffix trees:
\begin{lemma}\label{lem:CDAWG}
  Based on the left-to-right online CDAWG construction~\cite{InenagaHSTAMP05}, 
  we can design a data structure $\mathcal{C}_T$ of size $O(\e(T))$
  that supports all nine operations of Theorem~\ref{thm:main} in constant time.
  The data structure $\mathcal{C}_T$ can be updated to $\mathcal{C}_{Tc}$
  in amortized $O(\log\sigma)$ time where $c$ is a character.
\end{lemma}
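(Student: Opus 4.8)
The plan is to mirror the proof of Lemma~\ref{lem:Ukkonen}, replacing the implicit suffix tree by the implicit CDAWG $\CDAWG'(T)$ maintained under the left-to-right online construction of~\cite{InenagaHSTAMP05}. First I would equip the CDAWG with an \emph{active point} locating $\lrsuf(T)$ and a \emph{secondary active point} locating $\sqsuf(T)$, exactly as in the suffix-tree setting of~\cite{MienoFNIBT22}. Operations~1, 3, 5 and~6 of Theorem~\ref{thm:main} only inspect these two points together with a separately maintained counter for $\mathit{exp}$, so their constant-time implementations transfer essentially verbatim: operation~5 reads $|\lrsuf(T)|$ and $|\lrsuf(Tc)|$ off the active point, operation~3 performs the local test at the active point analogous to the leaf test of Lemma~\ref{lem:Ukkonen} (here, checking the edge toward the \emph{sink}), operation~1 uses the secondary active point, and operation~6 maintains $\mathit{exp}$ and its uniqueness test independently of the graph.

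The hard part is operations~2 and~4, which ask for the \emph{non-suffix} occurrence of $\sqsuf(Tc)$ and of $\lrsuf(Tc)$. In the suffix tree these were answered by reading a position off a leaf beneath the (secondary) active point, but the CDAWG merges all isomorphic subtrees, collapsing the per-suffix leaves — and with them the explicit occurrence positions — into the single sink. To recover positions I would exploit the \emph{primary edges} introduced in Section~\ref{sec:pre}: following primary edges from the source to any node $v$ spells the longest string represented by $v$, so I store at each $v$ the ending position $ep(v)$ of that primary occurrence and update it during construction. Given the (secondary) active point, I would combine the stored position $ep(v)$ of the node $v$ at the lower end of the active point's edge with the offset along that edge to obtain the ending position of the primary occurrence of the queried string; since the primary occurrence is the earliest one and hence distinct from the freshly created suffix ending at $|Tc|$, it is exactly the required non-suffix occurrence, and each such query costs $O(1)$.

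For operations~7--9 and the space bound, I would attach the elements of $E=\ENO(T)$ directly to the CDAWG rather than to a global position index. Every net unique substring has the form $aub$ with $u\in\M(T)$, since $au$ and $ub$ being unique together with $u$ repeating force $u$ to be both left- and right-maximal; thus $u$ is an internal node of $\CDAWG'(T)$, and the unique string $ub$ corresponds to a single out-edge of that node toward the sink. Because $ub$ is unique its lone occurrence fixes the preceding character $a$, so distinct extended net occurrences at a fixed node use distinct out-edges; this yields an injection from $\ENO(T)$ into the edge set, whence $\#\ENO(T)=O(\e(T))$ and the entire structure occupies $O(\e(T))$ space. Storing one net-occurrence flag per such out-edge makes operations~7--9 constant-time, and this suffices for Algorithm~\ref{alg:online} because every membership test, insertion, and deletion it performs is at a locus already reached through the active or secondary active point, so a direct pointer is available and no lookup by an arbitrary pair $(i,j)$ is required.

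Finally, for the update time I would rely on $\CDAWG'(T)$ being maintainable in amortized $O(\log\sigma)$ time per appended character~\cite{InenagaHSTAMP05}; the active point is part of that construction, the secondary active point is handled by the same mechanism used for it in~\cite{MienoFNIBT22}, and the stored positions $ep(v)$ together with the per-edge flags are updated in $O(1)$ amortized alongside each structural change. The main obstacle is making operations~2 and~4 correct without leaves; once the primary-edge positions are in place, the remaining bookkeeping follows the suffix-tree argument of Lemma~\ref{lem:Ukkonen} closely, completing the proof.
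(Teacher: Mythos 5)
Your proposal follows essentially the same route as the paper's proof: maintain the active point (for $\lrsuf$) and a secondary active point (for $\sqsuf$) on the implicit CDAWG of~\cite{InenagaHSTAMP05}, exploit the fact that the path spelling $\lrsuf(T)$ consists only of primary edges to recover lengths and positions in $O(1)$ time, and charge the space to Lemma~\ref{lem:implicit_CDAWG_size}. Your per-node end position $ep(v)$ plus edge offset is interchangeable with the paper's device of reading a position off the integer-pair edge label, and your treatment of operations 1, 3, 5, 6 and of the update time matches the paper's.

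There is, however, one concrete inaccuracy in your handling of operations 7--9. You argue that for a net unique substring $aub$ the repeat $u$ lies in $\M(T)$ and is \emph{therefore} an internal node of $\CDAWG'(T)$. The internal nodes of the \emph{implicit} CDAWG are the left-maximal \emph{right-branching} repeats, not the maximal repeats: $u$ is right-maximal because $ub$ is unique and $u$ repeats, but it can fail to be right-branching when its only other occurrence is as a suffix of $T$ (e.g.\ in $T=\mathtt{xaybza}$, the NUS $\mathtt{xay}$ has $u=\mathtt{a}\in\M(T)$, yet $\mathtt{a}$ lies on an edge of $\CDAWG'(T)$, not at a node). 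For such a $u$ there is no node and hence no out-edge on which to place your flag, and your injection into the edge set does not cover that element of $\ENO(T)$. The damage is limited --- two distinct repeating suffixes of this kind would force the corresponding $au$ to repeat, so at most one NUS is affected and the $O(\e(T))$ bound survives --- but as written the constant-time implementation of operations 7--9 and the counting argument both need either this special case handled separately or, more simply, the elements of $E$ anchored at loci on edges into the sink (as in Lemma~\ref{lem:Ukkonen}) rather than at nodes. Apart from this repairable slip, the proposal is sound and coincides with the paper's argument.
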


\begin{proof}
  Since the online implicit CDAWG construction algorithm~\cite{InenagaHSTAMP05} is based on Ukkonen's 
  implicit suffix tree construction,
  it also maintains the active point that indicates the locus corresponding to $\lrsuf(T)$.
  While the locus can correspond to multiple substrings of $T$ (as the CDAWG is a DAG),
  we can retrieve $|\lrsuf(T)|$ in $O(1)$ time by storing, in each node $v$ of $\CDAWG'(T)$,
  the length of the maximal repeat corresponding to $v$.
  This is because the path that spells out $\lrsuf(T)$ from the source consists only of the 
  primary edges (see~\cite{InenagaHSTAMP05} for more details).
  Since edge label $x$ is represented by an integer pair $(p,q)$ such that $x = T[p..q]$,
  we can obtain the non-suffix occurrence $(i',j')$ of $\lrsuf(T)$ in $O(1)$ time (Line~\ref{line:9} in Algorithm~\ref{alg:online}).

  The secondary active point that indicates the locus for $\sqsuf(T)$ can also be maintained 
  on the implicit $\CDAWG'(T)$ by adapting the algorithm from~\cite{MienoFNIBT22}.
  Let $y$ be the suffix of $T$ that is one-character shorter than $\sqsuf(T)$.
  By definition, $y$ is the longest suffix of $T$ such that $\#\occ_T(y) \geq 3$.
  Given the locus $P$ for $y$ on $\CDAWG'(T)$, one can check in $O(1)$ time whether the substrings corresponding to $P$
  occur at least 3 times, by checking the number of paths from $P$ to the sink
  and checking if the active point is in the subgraph under $P$.
  Also, by definition, $y$ is the longest string represented by the locus $P$.
  This tells us the length of $\sqsuf(T)$ as well.
  Thus, we can also maintain the secondary active point in a similar manner to 
  the active point on the implicit CDAWG in $O(\log \sigma)$ amortized time per character,
  and we can obtain the non-suffix occurrence $(i,j)$ of $\sqsuf(T)$ in $O(1)$ time (Line~\ref{line:3} in Algorithm~\ref{alg:online}).

  The $O(\e(T))$-space requirement follows from Lemma~\ref{lem:implicit_CDAWG_size}.
\end{proof}

\begin{theorem} \label{theo:CDAWG_ENO}
  We can compute the set of extended net occurrences of string $T$ of length $n$
  given in an online manner in a total of $O(n \log\sigma)$ time using $O(\e(T))$ working space.
\end{theorem}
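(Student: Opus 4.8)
The plan is to obtain the theorem as a direct corollary of Lemma~\ref{lem:CDAWG} combined with Theorem~\ref{thm:main}, exactly mirroring how the suffix-tree theorem was derived from Lemma~\ref{lem:Ukkonen}. First I would invoke Lemma~\ref{lem:CDAWG} to get the data structure $\mathcal{C}_T$ of size $O(\e(T))$ that answers each of the nine operations of Theorem~\ref{thm:main} in $O(1)$ time and is updatable to $\mathcal{C}_{Tc}$ in amortized $O(\log\sigma)$ time. Feeding these bounds into Theorem~\ref{thm:main} with $t(n)=O(1)$ shows that one execution of Algorithm~\ref{alg:online} turns $\ENO(T)$ into $\ENO(Tc)$ in $O(1)$ worst-case time on top of the $O(\log\sigma)$ amortized cost of advancing the CDAWG by one character. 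Processing $T$ left to right and summing the amortized update cost over the $n$ characters then telescopes to a total of $O(n\log\sigma)$ time, which is the time bound claimed.

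For the space bound the data structure itself occupies $O(\e'(T)) = O(\e(T))$ by Lemma~\ref{lem:implicit_CDAWG_size}, so the only nontrivial point is to show that the maintained set $E=\ENO(T)$ also fits in $O(\e(T))$ words, i.e., that $\#\ENO(T) = O(\e(T))$. I would prove the clean inequality $\#\ENO(T) \le \e(T)$ by an injection into the edges of the explicit CDAWG $\CDAWG(T)$. Consider an extended net occurrence, i.e., a net occurrence $[i..j]$ of a repeat $u$ with $au = T[i-1..j]$ and $ub = T[i..j+1]$ both unique. A short case analysis on a second occurrence of $u$ (which exists since $u$ is repeating) shows $u \in \LM(T) \cap \RM(T) = \M(T)$: extending that occurrence to the left cannot reproduce the unique string $au$, so either $u$ is a prefix of $T$ or it is left-branching, whence $u \in \LM(T)$; symmetrically $u \in \RM(T)$. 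Hence $u$ is realized by an internal node $v_u$ of $\CDAWG(T)$, and since $ub$ occurs in $T$, node $v_u$ has a unique out-edge whose label begins with $b$. Mapping the net occurrence to this out-edge is injective: two net occurrences of the same $u$ must carry distinct right-extension characters (otherwise the corresponding $ub$ would occur twice, contradicting uniqueness), and net occurrences of different repeats land on out-edges of different nodes. Thus $\#\ENO(T) \le \e(T)$, and storing $E$ together with $\mathcal{C}_T$ costs $O(\e(T))$ space.

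The step I expect to need the most care is the realization of operations~7--9 in worst-case $O(1)$ time, which Lemma~\ref{lem:CDAWG} asserts but whose CDAWG implementation is less immediate than in the suffix-tree setting: the suffix tree could hang each element $(i,j)\in E$ on the leaf for the unique suffix $T[i..n]$, whereas the implicit CDAWG has no such leaves. I would resolve this by attaching each net-occurrence record to the CDAWG locus of the unique string it extends, all of which are reachable in $O(1)$ time from the active point, the secondary active point, and the sink that Algorithm~\ref{alg:online} already consults; the injection above guarantees these records number $O(\e(T))$. The remaining obstacle is purely combinatorial, namely making the case analysis for $u \in \M(T)$ airtight at the boundary occurrences (when the second occurrence of $u$ is the prefix or the suffix of $T$); mapping into the explicit CDAWG $\CDAWG(T)$ rather than the implicit $\CDAWG'(T)$ is what keeps these boundary cases harmless, since $\M(T)$ --- unlike $\LM(T)\cap\RB(T)$ --- still contains every such $u$.
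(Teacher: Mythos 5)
Your derivation of the time bound is exactly the paper's: Lemma~\ref{lem:CDAWG} supplies the nine operations of Theorem~\ref{thm:main} in $O(1)$ time with amortized $O(\log\sigma)$-time updates, and summing over the $n$ characters gives the claimed $O(n\log\sigma)$ total. Your injection argument for $\#\ENO(T)\le\e(T)$ is also sound: each repeat $u$ admitting a net occurrence is indeed in $\M(T)$ (any second occurrence of $u$ cannot reproduce the unique strings $au$ or $ub$, so $u$ is left- and right-maximal), hence corresponds to a node of $\CDAWG(T)$, and distinct net occurrences of the same $u$ must carry distinct right-extension characters, so the map to out-edges is injective. This is a genuinely different route from the paper, which obtains the analogous bound only later as a corollary of the MUS characterization ($\#\ENO(T)=\#\MUS(T)-1$ together with $\#\MUS(T)\le\e(T)$ from~\cite{InenagaMAFF24}); your direct injection is more self-contained and is a reasonable complement to the paper's argument.

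The gap is in the space analysis of the \emph{online} process. The working space must be bounded at every step $i$, i.e., one needs $\e'(T[1..i]) = O(\e(T))$ for all $1\le i\le n$, not merely for $i=n$; a priori the implicit CDAWG of a prefix could be larger than that of the final string. This is precisely what the paper's proof consists of: it invokes the result of~\cite{InenagaHSTAMP05} that $\e'(T[1..i])$ is monotonically non-decreasing in $i$, so that $\e'(T[1..i])\le\e'(T)\le\e(T)$ by Lemma~\ref{lem:implicit_CDAWG_size}. Your proposal asserts only that ``the data structure itself occupies $O(\e'(T))$'', which bounds the final structure but not the intermediate ones. The same issue touches your bound on $E$: applied to a prefix, your injection yields $\#\ENO(T[1..i])\le\e(T[1..i])$, the size of the \emph{explicit} CDAWG of the prefix, and some monotonicity-type argument is still needed to relate this quantity to $\e(T)$. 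Adding the monotonicity citation closes the gap; the rest of your argument stands.
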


\begin{proof}
  The correctness and the time complexity follows from the above discussions.

  It is shown in~\cite{InenagaHSTAMP05} that 
  the function $\e'(T[1..i])$ is monotonically non-decreasing for any online string $T[1..i]$ 
  with increasing $i = 1, \ldots, n$.
  Together with Lemma~\ref{lem:implicit_CDAWG_size}, 
  we have $\e'(T[1..i]) \leq \e'(T) \leq \e(T)$ for any $1 \leq i \leq n$,
  which leads to an $O(\e(T))$-space bound.
\end{proof}
\section{Relating extended net occurrences and MUSs}

In this section, we give a full characterization of the extended net occurrences of repeats in $T$
in terms of minimal unique substrings (MUSs) in $T$.
See also Figure~\ref{fig:MUS_ENO} for illustration.

\begin{figure}[htb]
  \centering
  \includegraphics[scale=0.55]{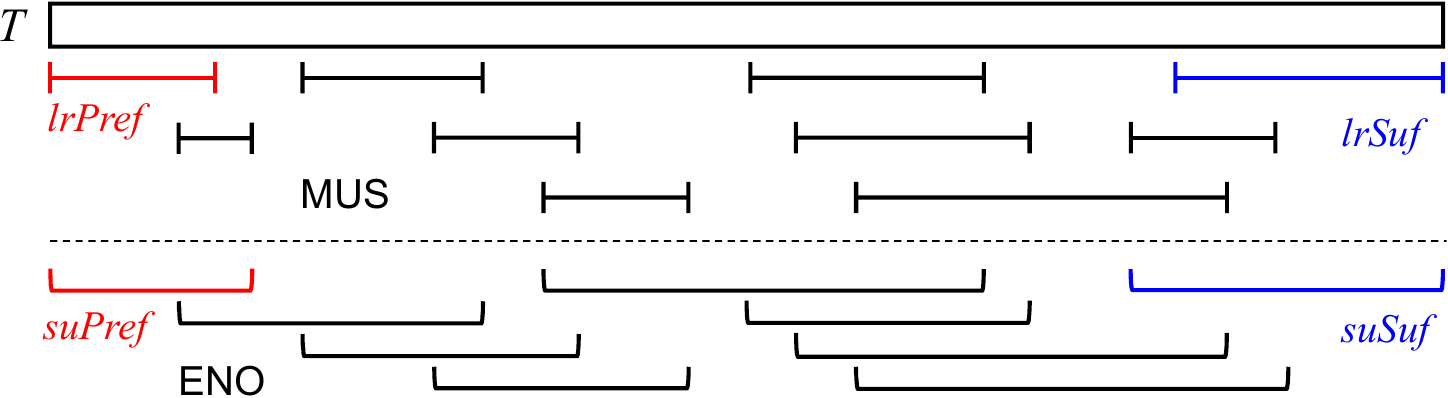}
  \caption{Illustration for Lemma~\ref{lem:ENO_MUS} and Lemma~\ref{lem:MUS_ENO-new}.}
  \label{fig:MUS_ENO}
\end{figure}

\begin{lemma}\label{lem:ENO_MUS}
  Let $[i-1..h], [k..j+1] \in \MUS(T)$ be the intervals that represent
  consecutive MUSs in $T$, namely, there is no element $[s..t]$ in $\MUS(T)$ such that $i \leq s < k$ and $h < t \leq j$.
  Then, $[i..j]$ is a net occurrence for repeat $u = T[i..j]$.
\end{lemma}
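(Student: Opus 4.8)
The plan is to verify directly the three defining conditions of a net occurrence of $u = T[i..j]$: that (1) $u$ is repeating in $T$, (2) $T[i-1..j]$ is unique in $T$, and (3) $T[i..j+1]$ is unique in $T$. First I would pin down the positional relations between the two given MUSs. The two MUSs are listed in left-to-right order, so $i-1 < k$ and hence $i \le k$; and since distinct MUSs never contain one another, ordering MUSs by starting position also orders them by ending position, so $i-1 < k$ forces $h < j+1$ and hence $h \le j$. These two inequalities, $i \le k$ and $h \le j$, are exactly what the easy conditions need.

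Conditions (2) and (3) then follow from the elementary observation that any substring of $T$ that contains a unique substring is itself unique. For (2), $T[i-1..h]$ is a MUS and hence unique, and because $h \le j$ it is a prefix of $T[i-1..j]$; therefore $T[i-1..j]$ is unique. Symmetrically, for (3), $T[k..j+1]$ is a MUS and hence unique, and because $i \le k$ it is a suffix of $T[i..j+1]$; therefore $T[i..j+1]$ is unique.

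The heart of the argument is condition (1), and the cleanest route is the standard characterization of MUSs~\cite{IlieS11}: a substring of $T$ is unique if and only if it contains an occurrence of some MUS. Contrapositively, the occurrence $[i..j]$ spells a repeat precisely when no MUS occurrence $[s..t]$ satisfies $i \le s \le t \le j$, so it suffices to rule out any MUS occurrence sitting inside $[i..j]$. Suppose for contradiction that $[s..t] \in \MUS(T)$ with $i \le s$ and $t \le j$. Comparing it with $[i-1..h]$: since $s \ge i > i-1$ we have $[s..t] \ne [i-1..h]$, and were $t \le h$ the MUS $[i-1..h]$ would contain $[s..t]$, which is impossible, so $t > h$. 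Comparing it with $[k..j+1]$: since $t \le j < j+1$ we have $[s..t] \ne [k..j+1]$, and were $s \ge k$ the MUS $[k..j+1]$ would contain $[s..t]$, which is impossible, so $s < k$. Hence $i \le s < k$ and $h < t \le j$, contradicting the hypothesis that $[i-1..h]$ and $[k..j+1]$ are consecutive. Therefore no MUS occurrence lies inside $[i..j]$, and $u = T[i..j]$ is repeating.

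I expect the main obstacle to be condition (1); everything else is routine superstring-uniqueness bookkeeping once the orderings $i\le k$ and $h\le j$ are in place. Within (1) the only delicate point is the boundary accounting, i.e. tracking the strict versus non-strict inequalities so that a hypothetical interior MUS is forced into exactly the forbidden window $i \le s < k$, $h < t \le j$, together with checking the degenerate situations (a very short window, or $u = \varepsilon$). Combining the three verified conditions shows that $[i..j]$ is a net occurrence of the repeat $u$, as claimed.
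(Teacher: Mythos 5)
Your proposal is correct and follows essentially the same route as the paper: both rest on the characterization that a substring of $T$ is unique if and only if it contains a MUS occurrence, deducing uniqueness of $T[i-1..j]$ and $T[i..j+1]$ from the two given MUSs they contain, and repetitiveness of $T[i..j]$ from the absence of any MUS inside $[i..j]$. You merely spell out in more detail the boundary bookkeeping (why an interior MUS would land in the forbidden window $i \le s < k$, $h < t \le j$) that the paper leaves implicit.
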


\begin{proof}
  Observe that any unique substring of $T$ must contain a MUS in $T$.
  Since $T[i..j]$ does not contain a MUS, $u = T[i..j]$ is a repeat in $T$.
  Let $a = T[i-1]$ and $b = T[j+1]$.
  Then, $\#\occ_T(aub) = \#\occ_T(au) = \#\occ_T(ub) = 1$ since any of $aub = T[i-1..j-1]$, $au = T[i-1..j]$, and $ub = T[i..j+1]$ contains a MUS.
\end{proof}
A consequence of Lemma~\ref{lem:ENO_MUS} is that
a net occurrence $[i..j]$ cannot be contained in another net occurrence $[i'..j']$.
This is because a MUS cannot be contained in another MUS.
Another consequence is that two consecutive extended net occurrences in $\ENO(T)$ are overlapping.

Below we show the reversed version of Lemma~\ref{lem:ENO_MUS}:

\begin{lemma}\label{lem:MUS_ENO-new}
  Let $\mathsf{L} = \ENO(T) \cup \{[1..p] \mid p = |\lrpref(T)|+1\} \cup \{[q..n] \mid q = n - |\lrsuf(T)|\}$.
  Let $[h..j], [i..k] \in \mathsf{L}$ be consecutive elements in $\mathsf{L}$, namely, 
  there is no element $[s..t]$ in $\mathsf{L}$ such that $h < s < i$ and $j < t < k$.
  Then, $[i..j] \in \MUS(T)$.
\end{lemma}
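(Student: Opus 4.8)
The plan is to prove the statement by fully determining the set $\mathsf{L}$ from the minimal unique substrings of $T$, and then reading off the ``gap'' $[i..j]$ between two consecutive elements as a single MUS. I would build on two elementary facts: every unique substring of $T$ contains at least one MUS, whereas a repeat contains none (since every substring of a repeat is itself a repeat). Together with the non-nesting of MUSs already used for Lemma~\ref{lem:ENO_MUS}, these give a \emph{staircase} order $M_1, \dots, M_m$ of $\MUS(T)$, with $M_t = [a_t..b_t]$ and $a_1 < \dots < a_m$, $b_1 < \dots < b_m$.

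First I would identify the two boundary elements of $\mathsf{L}$. For the prefix element $[1..p]$ with $p = |\lrpref(T)|+1$, the prefix $T[1..p-1] = \lrpref(T)$ is a repeat and hence contains no MUS, while $T[1..p]$ is unique; thus the MUS inside $[1..p]$ must end at $p$, and as no MUS lies entirely to its left it is exactly $M_1$, giving $p = b_1$ and $[1..p] = [1..b_1]$. Symmetrically, the suffix element equals $[a_m..n]$. Next I would establish the converse of Lemma~\ref{lem:ENO_MUS}: for any extended net occurrence $[x..y] \in \ENO(T)$, representing the repeat $T[x+1..y-1]$, the strings $T[x..y-1]$ and $T[x+1..y]$ are unique while $T[x+1..y-1]$ is a repeat, so the MUS inside $T[x..y-1]$ must start at $x$ and the MUS inside $T[x+1..y]$ must end at $y$; any further MUS meeting $[x..y]$ would be contained in the repeat $T[x+1..y-1]$, which is impossible. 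Hence these two MUSs are adjacent in the staircase and $[x..y] = [a_t..b_{t+1}]$ for some $t$. Combined with Lemma~\ref{lem:ENO_MUS}, the map $t \mapsto [a_t..b_{t+1}]$ is a bijection between the $m-1$ consecutive MUS pairs and $\ENO(T)$.

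Setting $a_0 = 1$ and $b_{m+1} = n$, the previous paragraph yields the exact description $\mathsf{L} = \{\,L_t := [a_t..b_{t+1}] : 0 \le t \le m\,\}$, and sorting by left endpoint (breaking ties by right endpoint) arranges these as the staircase $L_0, L_1, \dots, L_m$. I would then argue that a pair $[h..j], [i..k]$ of elements of $\mathsf{L}$ satisfies the dominance condition of the statement with a non-empty gap $i \le j$ precisely when it is an index-adjacent pair $L_{t-1}, L_t$; for such a pair $[h..j] = [a_{t-1}..b_t]$ and $[i..k] = [a_t..b_{t+1}]$, so the gap is $[i..j] = [a_t..b_t] = M_t \in \MUS(T)$, as claimed.

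I expect the main obstacle to be the converse-bijection step together with the boundary bookkeeping. Concretely, the delicate points are verifying that no extra MUS sits strictly between the two MUSs carved out by an extended net occurrence (so that $\ENO(T)$ contains no elements beyond the $L_t$), and handling the degenerate cases where the first MUS starts at position $1$ or the last MUS ends at position $n$. In those cases the two boundary elements share an endpoint with a neighboring $L_t$, so neither coordinate of the staircase is strictly monotone there; I would resolve this by phrasing consecutiveness as adjacency in the lexicographic staircase order and checking that the dominance condition still isolates exactly the adjacent overlapping pairs, whose gaps remain the correct MUSs $M_1$ and $M_m$.
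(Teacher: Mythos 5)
Your proposal is correct, but it is organized differently from the paper's proof, and the comparison is worth making explicit. The paper argues \emph{locally} and by contradiction: from the defining properties of $[h..j]$ and $[i..k]$ it extracts a MUS $[x..j]$ ending at $j$ (because $T[h+1..j]$ is unique while $T[h+1..j-1]$ is a repeat) and a MUS $[i..y]$ starting at $i$, and then shows $x=i$ and $y=j$ by observing that otherwise two distinct MUSs would sit strictly inside $[h..k]$, whence Lemma~\ref{lem:ENO_MUS} would manufacture an extended net occurrence between the two given elements, contradicting their consecutiveness; the boundary elements are treated by the same local argument. You instead argue \emph{globally}: you first prove the converse of Lemma~\ref{lem:ENO_MUS} (every $[x..y]\in\ENO(T)$ equals the span $[a_t..b_{t+1}]$ of two staircase-consecutive MUSs, since the MUS inside the unique string $T[x..y-1]$ must start at $x$, the one inside $T[x+1..y]$ must end at $y$, and any MUS strictly between them in the staircase would be contained in the repeat $T[x+1..y-1]$), identify the two boundary elements as $[1..b_1]$ and $[a_m..n]$, and then read the gap between index-adjacent elements of the resulting staircase off as a single MUS. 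Your route costs an extra converse step but buys the full structure of $\mathsf{L}$ and essentially delivers Theorem~\ref{theo:ENO_MUS} (the bijection and $\#\ENO(T)=\#\MUS(T)-1$) in one pass; the paper's route is shorter for this lemma alone and leans on Lemma~\ref{lem:ENO_MUS} for the contradiction. Both arguments rest on the same two facts (every unique substring contains a MUS, no repeat does) plus non-nesting. One caveat on your final step: in the degenerate cases $a_1=1$ or $b_m=n$ the literal dominance condition ``no $[s..t]$ with $h<s<i$ and $j<t<k$'' is satisfied vacuously by some non-adjacent pairs, so it does not by itself isolate exactly the index-adjacent pairs; this is a defect of the lemma's formalization of ``consecutive'' rather than of your argument (the paper's contradiction step has the same looseness), and under the intended reading of consecutiveness as adjacency in the sorted staircase both proofs go through.
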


\begin{proof}
  By the definition of the extended net occurrences,
  there is a MUS $[x..j]$ with $x \ge h+1$ that ends at $j$ since $T[h+1.. j]$ is unique and $T[h+1.. j-1]$ is repeating in $T$.
  Similarly, there is a MUS $[i..y]$ with $y \le k-1$ that starts at $i$.
  Here, for the sake of contradiction, we assume $i \ne x$.
  If $i < x$, there are at least two MUSs within range $[i, j] \subset [h, k]$.
  If $i > x$, there are at least two MUSs within range $[x, y] \subset [h, k]$.
  In both cases, there exists some net occurrence within range $[h, k]$ by Lemma~\ref{lem:ENO_MUS}, which contradicts that $[h..j]$ and $[i..k]$ are consecutive elements in $\mathsf{L}$.
  Thus $i = x$ holds.
  Similarly, we can prove $j = y$, hence $[i..j] \in \MUS(T)$.

  Consider the case where $h = 1$ and $j = p$, namely $T[1..p]$ is the \emph{shortest unique prefix} (\emph{suPref}) of $T$,
  and $T[i..k]$ is the leftmost extended net occurrence in $T$.
  Again by the definition of the extended net occurrences, there is a MUS that begins at position $i$.
  Let $T[1..p] = ub$ where $u \in \Sigma^\star$ and $b \in \Sigma$.
  Then, $u = \lrpref(T)$.
  Since $ub$ is unique and since $u = \lrpref(T)$,
  there must exist a MUS that ends at position $p$ (see Figure~\ref{fig:MUS_ENO}.)
  Using a similar argument as above, it can be proven that these two MUSs are the same.
  Thus $[i..p] \in \MUS(T)$.
  The case where $i = q$ and $k = n$ is symmetric.
\end{proof}

Consequently, the next theorem follows from Lemma~\ref{lem:ENO_MUS} and Lemma~\ref{lem:MUS_ENO-new}.
\begin{theorem} \label{theo:ENO_MUS}
  For any string $T$, 
  \begin{enumerate}
    \item[(1)] $\#\ENO(T) = \#\MUS(T)-1$.
    \item[(2)] $\ENO(T)$ can be obtained from the sorted $\MUS(T)$ in optimal $O(\#\ENO(T))$ time.
    \item[(3)] $\MUS(T)$ can be obtained from the sorted $\ENO(T)$, $|\lrpref(T)|$, and $|\lrsuf(T)|$ in optimal $O(\#\ENO(T))$ time.
  \end{enumerate}
\end{theorem}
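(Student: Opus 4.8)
The plan is to derive all three claims from the two structural correspondences already established, namely Lemma~\ref{lem:ENO_MUS} (each consecutive pair of MUSs yields an extended net occurrence) and Lemma~\ref{lem:MUS_ENO-new} (each consecutive pair of elements of $\mathsf{L}$ yields a MUS), which together realize $\MUS(T)$ and $\ENO(T)$ as two interleaved, totally ordered sequences of intervals. The first thing I would record is that both families are linearly ordered: no MUS contains another by definition, and by the remark following Lemma~\ref{lem:ENO_MUS} no net occurrence contains another, so within each family sorting by left endpoint coincides with sorting by right endpoint. Writing the sorted MUSs as $M_1,\dots,M_m$ with $M_t = [a_t..b_t]$, Lemma~\ref{lem:ENO_MUS} sends each consecutive pair $(M_t,M_{t+1})$ to the extended net occurrence $[a_t..b_{t+1}]$, while Lemma~\ref{lem:MUS_ENO-new} sends each consecutive pair $([h..j],[i..k])$ of sorted elements of $\mathsf{L}$ to the MUS $[i..j]$.

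For claim (1) I would bound $\#\ENO(T)$ from both sides. Distinct consecutive MUS pairs produce extended net occurrences with distinct left endpoints $a_t$, so the map of Lemma~\ref{lem:ENO_MUS} is injective, giving $\#\ENO(T) \ge \#\MUS(T)-1$. For the reverse inequality I would invoke Lemma~\ref{lem:MUS_ENO-new}: its map on consecutive pairs of $\mathsf{L} = \ENO(T) \cup \{[1..p]\} \cup \{[q..n]\}$ is again injective (distinct right endpoints $j$). Once the two boundary intervals are confirmed distinct from every element of $\ENO(T)$, we have $\#\mathsf{L} = \#\ENO(T)+2$, hence $\#\ENO(T)+1$ consecutive pairs, so $\#\MUS(T) \ge \#\ENO(T)+1$. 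Combining the two inequalities yields (1); tracing the two maps moreover shows they are mutually inverse, so the interleaving is a genuine bijection.

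Claims (2) and (3) are then single linear scans over these sorted sequences. For (2) I would sweep $M_1,\dots,M_m$ once and output $[a_t..b_{t+1}]$ for each adjacent pair; this emits all $\#\MUS(T)-1 = \#\ENO(T)$ extended net occurrences in $O(1)$ time each, and since the output already has size $\#\ENO(T)$ the running time is optimal. For (3) I would first form the sorted $\mathsf{L}$ by prepending $[1..|\lrpref(T)|+1]$ and appending $[n-|\lrsuf(T)|..n]$ to the given sorted $\ENO(T)$ (these are respectively the minimum and maximum elements in the order, so no merging is needed), then sweep $\mathsf{L}$ once, emitting the MUS $[i..j]$ for each adjacent pair $([h..j],[i..k])$ exactly as in Lemma~\ref{lem:MUS_ENO-new}. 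This runs in $O(\#\mathsf{L}) = O(\#\ENO(T))$ time and outputs all $\#\MUS(T) = \#\ENO(T)+1$ MUSs, hence is optimal.

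The main obstacle I anticipate is not the algebra but the boundary bookkeeping: I must verify that the two added intervals of $\mathsf{L}$ are distinct from all extended net occurrences and occupy the extreme positions in the sorted order, since otherwise the exact count $\#\mathsf{L} = \#\ENO(T)+2$ behind the formula in (1) could fail and the scan in (3) could misplace a boundary; and the degenerate case $\#\ENO(T)=0$ — where $\#\MUS(T)=1$ and $\mathsf{L}$ contributes a single pair — must be checked separately against the formula. These verifications rely on the assumption $n>2$ and on the defining relations $p = |\lrpref(T)|+1$ and $q = n-|\lrsuf(T)|$, which force the shortest unique prefix to be strictly shorter than any extended net occurrence starting at position $1$, and symmetrically on the right.
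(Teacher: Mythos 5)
Your proposal is correct and follows exactly the route the paper intends: the paper gives no explicit proof of Theorem~\ref{theo:ENO_MUS}, stating only that it ``follows from Lemma~\ref{lem:ENO_MUS} and Lemma~\ref{lem:MUS_ENO-new}'', and your argument is precisely that derivation with the injectivity, counting, and linear-scan details (and the boundary/degenerate cases) made explicit. No gaps; the extra care about the two added intervals of $\mathsf{L}$ and the case $\#\ENO(T)=0$ is a welcome tightening of what the paper leaves implicit.
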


We also have the following corollary for space-efficient computation of MUSs:
\begin{corollary} \label{coro:CDAWG_MUS}
  We can maintain the set of all MUSs of a string $T$ of length $n$
  given in an online manner in a total of $O(n \log\sigma)$ time using $O(\e(T))$ working space,
  where $\e(T)$ denotes the size of $\CDAWG(T)$.
\end{corollary}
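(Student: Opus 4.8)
The plan is to layer online MUS maintenance on top of the CDAWG-based online maintenance of $\ENO$ from Theorem~\ref{theo:CDAWG_ENO}, exploiting the tight structural correspondence of Lemmas~\ref{lem:ENO_MUS} and~\ref{lem:MUS_ENO-new}. First I would run the algorithm of Theorem~\ref{theo:CDAWG_ENO} to keep $\ENO(T[1..i])$ for growing $i$ inside the data structure $\mathcal{C}_T$, which already costs $O(n\log\sigma)$ total time and $O(\e(T))$ space. Alongside it I would store the augmented list $\mathsf{L}_i = \ENO(T[1..i]) \cup \{[1..p_i]\} \cup \{[q_i..i]\}$ of Lemma~\ref{lem:MUS_ENO-new} as a doubly linked list ordered by left endpoint, where $p_i = |\lrpref(T[1..i])|+1$ and $q_i = i-|\lrsuf(T[1..i])|$, together with the current set $\MUS(T[1..i])$. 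By Lemma~\ref{lem:MUS_ENO-new} every MUS is pinned down by a single consecutive pair of $\mathsf{L}_i$, and by Lemma~\ref{lem:ENO_MUS} the reverse holds, so $\MUS(T[1..i])$ is recoverable from, and can be kept in sync with, $\mathsf{L}_i$.

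The crux is that one append perturbs $\mathsf{L}_i$ by only $O(1)$ elements, hence perturbs the MUS set by only $O(1)$ elements. Algorithm~\ref{alg:online} already performs a constant number of insertions and deletions on $\ENO$ per character, and each of the two boundary intervals is touched at most once per step. Inserting, deleting, or shifting one element of the sorted list invalidates only the $O(1)$ MUSs carried by the consecutive pairs adjacent to that element, and each such MUS is recomputed in $O(1)$ time from the two endpoints furnished by the pair. Summed over $i$, the MUS bookkeeping therefore adds only $O(1)$ amortized time per character on top of the amortized $O(\log\sigma)$ cost of updating $\mathcal{C}_T$, giving $O(n\log\sigma)$ total time. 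Since $\#\MUS(T[1..i]) = \#\ENO(T[1..i])+1$ by part~(1) of Theorem~\ref{theo:ENO_MUS}, the auxiliary list fits in the same $O(\e(T))$ space that Theorem~\ref{theo:CDAWG_ENO} already guarantees.

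The hard part is the prefix boundary $p_i$, i.e.\ maintaining $|\lrpref(T[1..i])|$ within $O(\e(T))$ space, because the left-to-right construction underlying $\mathcal{C}_T$ is suffix-oriented: the active and secondary active points hand us $\lrsuf$ and $\sqsuf$ directly, whereas $\lrpref$ is a prefix quantity and its naive online computation (through the longest border of each prefix) would cost $\Theta(n)$ space. The clean way around this, which I would adopt, is to avoid $\lrpref$ altogether and instead port the suffix-side online MUS-maintenance procedure of~\cite{MienoFNIBT22} directly onto the implicit CDAWG. That procedure updates $\MUS$ using only the active point, the secondary active point, and the occurrences reachable from them, and the proof of Lemma~\ref{lem:CDAWG} already shows that $\mathcal{C}_T$ delivers exactly these---the relevant loci via primary edges and the witnessing occurrences via the edge labels $(p,q)$---in $O(1)$ time per query. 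Replacing the $O(n)$-space suffix tree of~\cite{MienoFNIBT22} by $\mathcal{C}_T$ then yields online MUS maintenance in $O(n\log\sigma)$ total time and $O(\e(T))$ working space, which is exactly the claim.
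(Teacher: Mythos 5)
Your overall skeleton matches the paper's: combine Theorem~\ref{theo:CDAWG_ENO} with Lemmas~\ref{lem:ENO_MUS} and~\ref{lem:MUS_ENO-new}, and observe that the only missing ingredient is $|\lrpref(T[1..i])|$. You correctly isolate that as the crux. Where you diverge is in how you resolve it: you declare $\lrpref$ too awkward to maintain in $O(\e(T))$ space and instead abandon the $\ENO\to\MUS$ derivation, proposing to port the direct suffix-oriented MUS-update procedure of~\cite{MienoFNIBT22} onto the implicit CDAWG. The paper's resolution is much lighter: $|\lrpref(T)|$ can be read off the implicit CDAWG in $O(1)$ time, because the longest repeating prefix lies on the primary path to the sink --- if $z$ is the node from which the primary edge to the sink stems, then $|\lrpref(T)| = |z|$ when the active point is not on that edge, and $|z|+k$ when it is, where $k$ is the active point's offset from $z$. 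So no prefix-side machinery and no borders are needed, and the $\ENO$-based route goes through unchanged. Your alternative is plausible --- the MUS changes under character appends are indeed all witnessed by suffix-anchored quantities, and Lemma~\ref{lem:CDAWG} supplies the needed queries --- but it carries a heavier proof burden (you must re-establish the MUS-update lemmas of~\cite{MienoFNIBT22} on a DAG where a locus represents several strings, and re-do the bookkeeping that in the suffix tree hangs off individual leaves), and it effectively bypasses the $\ENO$/$\MUS$ correspondence that the corollary is meant to exploit. Both routes yield the claimed $O(n\log\sigma)$ time and $O(\e(T))$ space; the paper's is the shorter path given the machinery already built.
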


\begin{proof}
  By combining Theorem~\ref{theo:CDAWG_ENO} and Lemmas~\ref{lem:ENO_MUS} and~\ref{lem:MUS_ENO-new},
  we obtain the corollary except for computation of $|\lrpref(T)|$.
  This can easily be maintained in the implicit CDAWG as follows.
  Let $z$ be the node of $\CDAWG'(T)$ from which the primary edge to the sink stems out.
  We identify $z$ with the maximal repeat that the node represents.
  If the active point does not exist on this primary edge, then $|z| = |\lrpref(T)|$.
  If the active point lies on this primary edge leading to the sink,
  then $|z|+k = |\lrpref(T)|$, where $k$ is the offset of the active point on the primary edge from the node $z$.
\end{proof}

For any string $T$, $\#\MUS(T) \leq \e(T)$ holds~\cite{InenagaMAFF24}.
Together with Theorem~\ref{theo:ENO_MUS}, we obtain:
\begin{corollary}
  For any string $T$, $\#\ENO(T) < \e(T)$ holds.
\end{corollary}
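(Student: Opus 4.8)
The plan is simply to chain together the two ingredients already in hand. First I would invoke part~(1) of Theorem~\ref{theo:ENO_MUS}, which gives the exact identity $\#\ENO(T) = \#\MUS(T)-1$. This converts the target bound on the number of extended net occurrences into a statement about the number of MUSs, shifted down by one.

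Next I would apply the cited inequality $\#\MUS(T) \leq \e(T)$ from~\cite{InenagaMAFF24}, which bounds the number of MUSs by the number of edges of the explicit CDAWG of $T$. Substituting this into the identity yields $\#\ENO(T) = \#\MUS(T) - 1 \leq \e(T) - 1 < \e(T)$, which is exactly the claimed bound. Note that the strict inequality comes for free from the $-1$ term, so there is no need for a separate argument to rule out equality.

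I do not expect any genuine obstacle here: both the identity (established in this paper) and the MUS-versus-edges bound (established in prior work) are already available, and the corollary is just their composition. The only point worth a moment of care is verifying that it is precisely the $-1$ in the identity that upgrades the non-strict bound $\#\MUS(T) \leq \e(T)$ into the strict bound $\#\ENO(T) < \e(T)$; beyond that, the proof is a one-line substitution.
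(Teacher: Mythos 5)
Your proposal is correct and matches the paper's own derivation exactly: the corollary is obtained by combining $\#\ENO(T) = \#\MUS(T)-1$ from Theorem~\ref{theo:ENO_MUS} with the bound $\#\MUS(T) \leq \e(T)$ from~\cite{InenagaMAFF24}, with the strictness coming from the $-1$. Nothing further is needed.
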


\section{Conclusions and open questions}

In this paper we presented how Ukkonen's left-to-right online suffix tree construction 
can be used for online computation of string net frequency.
Our main contributions are space-efficient algorithms for computing string net occurrences,
one works in $O(d)$ space in the sliding model for window-length $d$,
and the other works in $O(\e(T))$ space where $\e(T)$ denotes the size of the CDAWG of the input string $T$.
Both of our methods run in $O(n \log \sigma)$ time 
and can report all (extended) net occurrences of repeats in the current string in output-optimal time.
We also showed that computing the sorted list of extended net occurrences of repeats in a string $T$
is equivalent to computing the sorted list of minimal unique substrings (MUSs) in $T$.

An intriguing open question is whether one can efficiently compute the
extended net occurrences of repeats within $O(\rb)$ space, where $\rb$ denotes the number of 
equal-character runs in the BWT of the input string.
It is known that $\rb \leq \e$ holds for any string~\cite{BelazzouguiCGPR15}.
The \emph{R-enum} algorithm of Nishimoto and Tabei~\cite{NishimotoT21} is able to compute
the set of MUSs in $O(n \log \log_\omega (n/\rb))$ time with $O(\rb)$ space,
where $\omega$ denotes the machine word size of the word RAM model.
However, it is unclear whether their algorithm can output a list of MUSs
arranged in the sorted order of the beginning positions within $O(\rb)$ space.

\end{document}